\documentclass[a4paper, reqno]{amsart}
\usepackage{a4wide}
\usepackage{amsmath,amsfonts,amssymb,amsthm,color}
\usepackage{comment}
\usepackage{enumitem}

\setlength{\baselineskip}{1.5\baselineskip}

\renewcommand\labelenumi{(\roman{enumi})}
\renewcommand\theenumi\labelenumi

\theoremstyle{plain}
\newtheorem{theorem}{Theorem}[section]

\newtheorem{corollary}[theorem]{Corollary}
\newtheorem{lemma}[theorem]{Lemma}

\theoremstyle{remark}
\newtheorem{remark}[theorem]{Remark}

\newcommand{\xx}{{\underline x}}

\newcommand{\fl}{\mathfrak{f}_\lambda}

\newcommand{\R}{{\mathbb R}}

\newcommand{\kk}{{\bf k}}

\newcommand{\x}{{\bf x}}

\newcommand{\yy}{{\underline y}}
\newcommand{\y}{{\bf y}}

\numberwithin{equation}{section}

\newcommand{\tx}{\widetilde{\bf x}}

\newcommand{\F}{\Pi}
\newcommand{\Fe}{\mathfrak{P}}
\newcommand{\Ids}{\mathcal{I}}
\newcommand{\Chern}{{\rm Ch}}
\newcommand{\ASl}{\mathcal{A}}

\newcommand{\N}{{\mathbb N}}
\newcommand{\Z}{{\mathbb Z}}

\newcommand{\Id}{\mathbf{1}}

\newcommand{\iu}{\mathrm{i}}

\newcommand{\set}[1]{\left\{ #1 \right\}}

\newcommand{\sub}[1]{_{\text{#1}}}

\DeclareMathOperator{\Tr}{Tr}

\newcommand{\BF}{\mathcal{U}_{BF}}
\newcommand{\BFZ}{\mathcal{U}_{BFZ}}


\title[Gap labelling for Bloch-Landau Hamiltonians]{Beyond Diophantine Wannier diagrams:\\ Gap labelling for Bloch-Landau Hamiltonians}
\author{Horia D. Cornean, Domenico Monaco \and Massimo Moscolari}
\date{April 10, 2020. arXiv version 2} 

\usepackage{hyperref}


\begin{document}


\begin{abstract}
It is well known that, given a $2d$ purely magnetic Landau Hamiltonian with a constant magnetic field $b$ which generates a magnetic flux $\varphi$ per unit area, then any spectral island $\sigma_b$ consisting of $M$ infinitely degenerate Landau levels carries an integrated density of states $\mathcal{I}_b=M \varphi$. Wannier later discovered a similar Diophantine relation expressing the integrated density of states of a gapped group of bands of the Hofstadter Hamiltonian as a linear function of the magnetic field flux with integer slope.

We extend this result to a gap labelling theorem for any $2d$ Bloch-Landau operator $H_b$ which also has a bounded $\Z^2$-periodic electric potential. Assume that $H_b$ has a spectral island $\sigma_b$ which remains isolated from the rest of the spectrum as long as $\varphi$ lies in a compact interval $[\varphi_1,\varphi_2]$. Then $\mathcal{I}_b=c_0+c_1\varphi$ on  such intervals, where the constant $c_0\in \mathbb{Q}$ while $c_1\in \Z$. The integer $c_1$ is the Chern marker of the spectral projection onto the spectral island $\sigma_b$. This result also implies that the Fermi projection on $\sigma_b$, albeit continuous in $b$ in the strong topology, is nowhere continuous in the norm topology if either $c_1\ne0$ or $c_1=0$ and $\varphi$ is rational.

Our proofs, otherwise elementary, do not use non-commutative geometry but are based on gauge covariant magnetic perturbation theory which we briefly review  for the sake of the reader. Moreover, our method allows us to extend the analysis to certain non-covariant systems having slowly varying magnetic fields.	
	
\bigskip

\noindent \textsc{Keywords.} Bloch-Landau Hamiltonian, gap labelling theorem, St\v{r}eda formula, Chern marker, magnetic perturbation theory.

\medskip

\noindent \textsc{Mathematics Subject Classification 2010.} 81Q30, 81Q70.
\end{abstract}

\maketitle

\tableofcontents

\section{Introduction and main results}

It is by now textbook material \cite{Landau30,Pierls55} that each Landau level of an electron moving freely in 2-dimensions in the presence of a constant magnetic field $b$ carries a density of states per unit area equal to the magnetic field flux $\varphi=b/2\pi$, in a suitable system of physical units.

In 1978, Wannier \cite{Wannier78} realized by an ingenious counting argument that the integrated density of states of any isolated group of mini-bands of the Hofstadter Hamiltonian \cite{Hofstadter76}, a discrete analogue of the magnetic Laplacian, is a linear Diophantine function of the rational magnetic flux. Moreover, its slope is an integer which remains unchanged as long as the group of mini-bands under consideration remains isolated from the other ones. More specifically, consider a reference magnetic field $b_0$ such that $b_0 =2\pi p_0/q$, with $p_0,q \in \Z$ co-prime integers and $q$ a very large integer. If $\mathcal{I}_{b_0}$ denotes the integrated density of states associated to $M$ mini-bands of the Hofstadter Hamiltonian at magnetic field $b_0$, it holds that $\mathcal{I}_{b_0}= M/q$. Denoting by  $\mathcal{I}_{b}$ the integrated density of states associated to the same group of mini-bands (which now consists of a different number) but for $b= 2\pi p/q$, with integers $p,q$ and $p-p_0$ small enough, then
\begin{equation} \label{WannierDiagram}
q \, \mathcal{I}_{b}=M +c_1 (p-p_0), \quad c_1 \in \Z.
\end{equation}
Notice that the left-hand side of the above equality counts the number of charge carriers in a supercell of area $q$.
Without giving a formal proof, Wannier came to the natural conclusion that this relationship should also hold for all irrational values of the flux: this allowed him to label the gaps in the spectrum of the Hofstadter Hamiltonian by ``diagrams'' consisting of linear functions of magnetic flux with integer slopes. No wonder that his paper was rather cryptically entitled  {\it A Result Not Dependent on Rationality for Bloch Electrons in a Magnetic Field}. 

In 1982, starting from the linear response ansatz, St{\v r}eda \cite{Streda82a}  showed  that the Hall conductivity is proportional to the derivative with respect to the magnetic flux of the integrated density of states of the Fermi projection, provided the Fermi energy is in a gap. Then in \cite{Streda82} he used Wannier's result from 1978 in order to conclude that the Hall conductivity is proportional to an integer, namely $c_1$ in the above formula \eqref{WannierDiagram}. 

Still in 1982, Thouless {\sl et al.}\ \cite{TKNN82} showed that the Hall conductivity is proportional to the Chern number of the Fermi projection whenever the number of magnetic flux quanta per unit cell is rational, and thus identified the geometric origin underlying the integer $c_1$: this relation of the Hall conductivity with topological numbers was later clarified also by Avron {\sl et al.}, see \cite{AvronSeilerSimon83}. Reasoning in analogy with the Hofstadter model, and inspired by Wannier's work, Thouless and his collaborators concluded that the results should persist also at irrational values of the magnetic flux. This led Avron and Osadchy to produce ``colored Hofstadter butterflies'', where the gaps in the spectrum of the Hofstadter Hamiltonian are labelled according to their associated Chern number \cite{OsadchyAvron01,Avron04}.

For discrete and continuum gapped models of Bloch electrons, Wannier's result and its connection with the Chern marker (see \eqref{GenChern} below) for all real flux values were rigorously formulated by Bellissard \cite{Bellissard86,Bellissard89} in the language of non-commutative geometry. The equality argued by Wannier, generalizing \eqref{WannierDiagram} to any $b$, was dubbed ``gap labelling conjecture'' in \cite{Bellissard86}, and translated in a statement about the $K$-theory of certain crossed product $C^*$-algebras. Bellissard proved the gap labelling conjecture for aperiodic crystals without magnetic field in \cite{Bellissard92}, and the proof has been extended to more general quasi-crystals by Benameur and Oyono-Oyono in \cite{BenameurOyono07}; the full proof of the ``magnetic'' gap labelling conjecture was achieved by Benameur and Mathai in \cite{BenameurMathai15,BenameurMathai18}. In the case of periodic potentials, the proof of the $K$-theoretic reformulation of the magnetic gap labelling conjecture can be traced back to a theorem by Elliott \cite{Elliott84} on the $K$-theory of the rotation $C^*$-algebra, elaborating upon earlier results by Connes \cite{Connes80}, Pimsner and Voiculescu \cite{PimsnerVoiculescu80}, and Rieffel \cite{Rieffel81} for the two-dimensional case which is also of interest for the present paper.
Unfortunately, Bellissard used the term ``St{\v r}eda formula'' to denote the equality between the derivative of the integrated density of states and the Chern marker, although St{\v r}eda's contribution strictly consisted in relating the derivative with respect to the magnetic field of the integrated density of states with the Hall conductivity, within linear response. We note that the actual ``St{\v r}eda formula'', in the latter sense, was rigorously proved in the gapped continuum case by Cornean {\sl et al.}  \cite{CorneanNenciuPedersen06}.

Schulz-Baldes and Teufel  \cite{SchulzBaldesTeufel13} significantly improved the results of St{\v r}eda and also extended the results of \cite{Bellissard89} to the case when the Fermi energy is situated in a mobility gap (see also \cite{BellissardSchulz-Baldes94} for the proof of integrality of the Chern marker in the latter regime). Of a similar flavour are certain higher-dimensional generalizations of the ``St{\v r}eda formula'', presented in the monograph by Prodan and Schulz-Baldes \cite{ProdanSchulz-Baldes16}. We note though that their proofs are formulated for bounded Hamiltonians.

\subsection{Goals and structure}
In this paper, we first provide a proof of the Wannier diagrams for  unbounded Bloch-Landau Hamiltonians acting in $L^2(\R^2)$; we achieve this in Theorem \ref{thm:Main}\eqref{(i)} and \eqref{(ii)}. As a by-product, we show in Corollary \ref{crl:Main} that while the Fermi projection is everywhere continuous in the strong topology as a function of the magnetic flux, there are situations in which this map is nowhere continuous in the norm topology!

Our second novel result, Theorem \ref{thm:StredaLimSupInf}, extends the gap labelling (in a weaker sense) to more general perturbations; in particular, we are interested in perturbations given by slowly varying magnetic fields, which generically break covariance.  

The main tool we use is the so-called gauge covariant magnetic perturbation theory developed by Cornean and Nenciu \cite{CorneanNenciu98,Nenciu02,CorneanNenciu09,Cornean10}. We do not use noncommutative geometry, and clarify the physical meaning of the ``gap labelling'' through Wannier diagrams. 

The structure of the paper is as follows. In the rest of this Section we formulate our two main results, namely Theorems \ref{thm:Main} and \ref{thm:StredaLimSupInf}. In Section \ref{sect2} we prove Theorem \ref{thm:Main}, in Section \ref{sect3} we prove Theorem \ref{thm:StredaLimSupInf}, while in Appendix \ref{app:ChernAppendix} we review the magnetic Bloch-Floquet transform and the Chern number. Then, in Appendix \ref{app:Kernel} we review the gauge covariant magnetic perturbation theory and we end with Appendix \ref{app:KatoNagy}, in which we prove a localization estimate for the integral kernel of a Kato-Nagy unitary operator that will be used in the proof of Corollary \ref{crl:Main}.

\subsection{The covariant setting} 

We consider a Bloch-Landau Hamiltonian acting on $L^2(\R^2)$, defined in Hartree atomic units by
\[
H_b=\frac{1}{2}\left({\bf P} - b{\bf A}\right)^2 + V \, ,
\]
where ${\bf P}=-\iu \nabla$ is the usual momentum operator, $b \in \R$ is the magnetic field strength divided by the speed of light, and ${\bf A}=\frac{1}{2} (-x_2,x_1)$ is the magnetic potential in the symmetric gauge. $V$ is a $\Z^2$-periodic electric potential; although we could  handle  certain singularities, in order to streamline the proofs we choose to work with bounded $V$'s.   We note that we could work with any periodicity lattice $\Gamma \subset \R^2$ instead of $\Z^2$, but for simplicity  we only consider $\Z^2$-periodic potentials. 
Under these conditions, $H_b$ is essentially selfadjoint on $C^\infty_0(\R^2)$.

Suppose that the spectrum of $H_b$ has an {\em isolated} spectral island $\sigma_b$; by definition, isolated means that it is separated by the rest of the spectrum by one or two gaps. The edges of these gaps vary continuously with $b$ \cite{Nenciu86}, thus via the Riesz formula we can define the spectral projection
\[
\F_b=\frac{\iu}{2 \pi } \oint_{\mathcal{C}}  \left(H_b-z {\bf 1}\right)^{-1} dz\, 
\]
where $\mathcal{C}$ is a positively oriented simple contour encircling $\sigma_b$ and staying at a positive distance from the spectrum as long as the two gaps we started with remain open. We note that the nature and the structure of $\sigma_b$ can dramatically change when $b$ varies, i.e.\ internal mini-gaps can open or close, but as a set, $\sigma_b$ varies continuously with respect to the Hausdorff distance. It is possible to prove that $\F_b$ admits a jointly continuous integral kernel, see Appendix \ref{app:Kernel}. Using Combes-Thomas estimates \cite{CT73,CorneanNenciu09} (see also Appendix \ref{app:Kernel}) one can prove the existence of $\alpha,C>0$ such that  
\begin{equation}
\label{hc2}
|\F_b(\x;\x')|\leq C e^{-\alpha \Vert \x-\x'\Vert},\quad \forall \x,\x'\in \R^2.
\end{equation}
These two constants can be chosen uniformly in $b$ as long as $b$ varies in an interval such that {the distance between $\sigma_b$ and the rest of the spectrum is bounded from below by a positive constant. }

Let $\Lambda_L$ be the square of side-length $L>1$ centered at the origin and let $\chi_L$ be its characteristic function. Due to \eqref{hc2} we have that both $\chi_L \Pi_b e^{\alpha \|\cdot\|/2}$ and $e^{-\alpha \|\cdot\|/2}\Pi_b$ are Hilbert-Schmidt operators, thus
\begin{equation}
\label{hc3}
\chi_L \Pi_b=\left \{\chi_L \Pi_b e^{\alpha \|\cdot\|/2}\right\} \left\{ e^{-\alpha \|\cdot\|/2}\Pi_b\right\}
\end{equation}
is trace class with $0\leq   \Tr(\chi_L \Pi_b\chi_L)=\Tr(\chi_L \Pi_b)\leq C\, L^2$.

In the above considerations, the periodicity of $V$ played no role. When $V$ is $\Z^2$-periodic, we decompose every point $\x \in \R^2$ uniquely as
\[
\x=\gamma + \xx \, , \qquad \gamma \in \Z^2\, , \quad \xx \in \left(-1/2,1/2\right] \times \left(-1/2,1/2\right]   =: \Omega \, .
\]
Define the Peierls antisymmetric phase by
\begin{equation}
\label{Peierlsphase}
\phi(\x,\x'):=\frac{1}{2} ( x'_1 x_2-x_2'x_1 )\, , \quad  \forall \, \x,\x' \in \R^2\, .
\end{equation}
The phase satisfies the composition rule
\begin{equation}
\label{CompositionRule}
\phi(\x,\y) + \phi(\y,\x') = \phi(\x,\x') + \phi(\x-\y,\y-\x') \quad \forall \, \x, \y, \x' \in \R^2\,  .
\end{equation}
The Hamiltonian $H_b$ commutes with the magnetic translations $\tau_{b,\eta}$, defined for every $\eta \in \Z^2$ by
\[
(\tau_{b,\eta} \psi )(\x):= e^{\iu b \phi(\x,\eta)} \psi(\x-\eta) \, , \qquad \forall \psi \in L^2(\R^2)\,. 
\] 
Because $\Pi_b$ then also commutes with the magnetic translations, for every $\eta \in \Z^2$ we have
\begin{equation}
\label{IntertwMagTrs}
\F_b(\x,\x')=e^{\iu b \phi(\x,\eta)} \F_b(\x-\eta;\x'-\eta) e^{-\iu b \phi(\x',\eta)} \,.
\end{equation}

We can then define the integrated density of states for the projection $\F_b$ by the formula
\begin{equation}
\label{Ids}
\Ids(\F_b):=  \lim\limits_{L \to \infty} \frac{1}{|\Lambda_L|} \Tr(\chi_L\F_b)=\int_\Omega  \, \F_b(\xx;\xx) d\xx\, ,
\end{equation}
where the second equality is a consequence of  the $\Z^2$-periodicity of $\Pi_b(\x;\x)$, implied by \eqref{IntertwMagTrs}. 

Let $X_j$ be the multiplication operator by $x_j$. Due to \eqref{hc2} we have that the commutators $[X_j,\Pi_b]$ have continuous integral kernels given by $(x_j-x_j')\Pi_b(\x;\x')$ and which are exponentially localized near the diagonal.  Due to the \eqref{IntertwMagTrs} we see that these commutators also commute with the magnetic translations. Again by \eqref{IntertwMagTrs}, the integral kernel of $\iu \F_b \left[ \left[X_1,\F_b\right] , \left[X_2,\F_b\right] \right]$ is such that for all $\eta \in \Z^2$
\begin{equation*}
\left(\iu\F_b \left[ \left[X_1,\F_b\right] , \left[X_2,\F_b\right] \right]\right)(\x;\x')= e^{\iu b \phi(\eta,\x)} \left(\iu\F_b \left[ \left[X_1,\F_b\right] , \left[X_2,\F_b\right] \right]\right)(\x-\eta,\x'-\eta) e^{-\iu b \phi(\eta,\x')}.
\end{equation*} 
Inspired by the non-commutative Chern character used in \cite{Bellissard86,Bellissard89,BellissardSchulz-Baldes94}, by the local Chern marker defined in \cite{BiancoResta2011}, and in accordance with the Chern marker recently defined in a more general setting in \cite{MarcelliMonacoMoscolariPanati}, we define the Chern marker as
\begin{equation}
\label{GenChern}
\Chern(\F_b):=  2 \pi \int_\Omega  \, \left(\iu\F_b \left[ \left[X_1,\F_b\right] , \left[X_2,\F_b\right] \right]\right)(\xx;\xx) d\xx\, .
\end{equation}

We are now prepared to formulate our first main result, a gap labelling theorem for Bloch-Landau Hamiltonians.
\begin{theorem}
	\label{thm:Main}
	Assume that  $H_{b}$ has an isolated spectral island $\sigma_{b}$ which remains isolated and varies continuously in the Hausdorff distance as long as $b\in (b_1,b_2)$. Let $\F_b$ be its corresponding spectral projection. Then:
	\begin{enumerate}[leftmargin=*,label=(\roman*),ref=(\roman*)]
		\item \label{(i)} the map $(b_1,b_2)\ni b\mapsto  \Ids(\F_b)\in \R$ is continuously differentiable and
		\begin{equation}
		\label{StredaFormula}
		\frac{d\Ids(\F_b)}{db}= \frac{1}{2\pi}\Chern(\F_b)
		\end{equation}
		with $\Chern(\F_b)$ as in  \eqref{GenChern};
		
		\item \label{(ii)} the Chern marker is constant on $(b_1,b_2)$ and
		$$
		\Chern(\F_b)=c_1 \in \Z ;
		$$
		Moreover, there exists a rational number $c_0\in \mathbb{Q}$ such that 
		\begin{align}\label{hc1'}		
		\Ids(\F_b)=c_0+c_1\frac{b}{2\pi},\quad b\in (b_1,b_2).
		\end{align}
	\end{enumerate}
\end{theorem}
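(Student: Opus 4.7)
The plan is to derive the Streda formula in part (i) via gauge covariant magnetic perturbation theory, and then deduce the integrality of $c_1$ and the rationality of $c_0$ in part (ii) by evaluating at rational fluxes through the magnetic Bloch-Floquet transform. The first step is to compare, starting from the Riesz formula for $\F_b$, the resolvents $(H_{b+\epsilon}-z)^{-1}$ and $(H_b-z)^{-1}$ using the central identity of the Cornean-Nenciu theory reviewed in Appendix \ref{app:Kernel}. That identity factors the perturbed resolvent kernel as the Peierls phase $e^{\iu\epsilon\phi(\x,\x')}$ times a ``tilded'' kernel solving a Dyson-type integral equation, and produces a pointwise expansion of $\F_{b+\epsilon}(\x,\x')$ around $\epsilon=0$ whose linear coefficient is built out of $\iu\phi(\x,\x')\F_b(\x,\x')$ and an integral correction, with remainder uniformly exponentially localized in $\x-\x'$ for $z\in\mathcal{C}$.

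The second step is to take the trace per unit volume \eqref{Ids} of this expansion. The $\Z^2$-covariance \eqref{IntertwMagTrs} together with the decay \eqref{hc2} allow all boundary remainders to be controlled in the $L\to\infty$ limit and the limit to be identified as an integral over the fundamental cell $\Omega$. Using the antisymmetry and bilinearity of $\phi$, one recognizes the $\phi(\x,\x')$ weight as encoding the symplectic form dual to $\partial_{x_1}\wedge\partial_{x_2}$; a short algebraic manipulation, together with collapsing the contour integral via the functional calculus, then rewrites the resulting diagonal kernel as the diagonal of $\iu\F_b\left[\left[X_1,\F_b\right],\left[X_2,\F_b\right]\right]$, producing \eqref{StredaFormula}. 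The same estimates also show that $b\mapsto\Chern(\F_b)$ is continuous on $(b_1,b_2)$.

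For part (ii), at a rational flux $b_0=2\pi p_0/q_0$ with $\gcd(p_0,q_0)=1$, the magnetic translations $\tau_{b_0,q_0\e_1}$ and $\tau_{b_0,\e_2}$ commute and generate a $\Z^2$-action with magnetic fundamental cell of area $q_0$. The magnetic Bloch-Floquet transform of Appendix \ref{app:ChernAppendix} then turns $\F_{b_0}$ into a continuous family of finite-rank hermitian projections on a vector bundle $\mathcal{E}_{b_0}$ over the magnetic Brillouin torus, and the integral \eqref{GenChern} reduces to $2\pi$ times the classical first Chern number of $\mathcal{E}_{b_0}$, hence to an integer. Since $b\mapsto\Chern(\F_b)$ is continuous on $(b_1,b_2)$ and integer valued on the dense subset of rational fluxes, it must be constant equal to some $c_1\in\Z$. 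Integrating \eqref{StredaFormula} then yields $\Ids(\F_b)=c_0+c_1 b/(2\pi)$ for some constant $c_0$, while the trace per unit area of $\F_{b_0}$ equals $N/q_0$ where $N\in\N$ is the rank of $\mathcal{E}_{b_0}$; hence $q_0 c_0=N-c_1 p_0\in\Z$, so $c_0\in\mathbb{Q}$.

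The main technical obstacle lies in the first two steps: producing a first-order-in-$\epsilon$ expansion of $\F_{b+\epsilon}-\F_b$ whose error kernel is negligible for the trace-per-unit-volume limit, and rearranging the $\phi$-weighted kernel into the nested commutator expression defining $\Chern(\F_b)$. Both points rely crucially on the composition rule \eqref{CompositionRule} and on uniform Combes-Thomas-type bounds for the magnetic resolvent along the contour $\mathcal{C}$, which are what ultimately allow the limit $L\to\infty$ to be interchanged with differentiation in $b$.
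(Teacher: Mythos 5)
Your part (ii) matches the paper: integrality of $\Chern(\F_b)$ at rational flux via the magnetic Bloch--Floquet transform, constancy by continuity, and $c_0=(M-c_1p)/q\in\mathbb{Q}$ from the trace per unit area at a rational flux. The genuine gap is in part (i), at exactly the point you yourself flag as ``the main technical obstacle'' but then do not resolve. You propose to expand $(H_{b+\epsilon}-z)^{-1}$ to first order in $\epsilon$ along the contour and to identify the linear coefficient of the diagonal of $\F_{b+\epsilon}$ with the Chern character by ``a short algebraic manipulation'' and ``collapsing the contour integral via the functional calculus''. This is not a short manipulation: the first-order term of the Riesz integral produces, after contour integration, contributions from \emph{all} spectral subspaces of $H_b$ (off-diagonal blocks $\F_b(\cdots)(\mathbf{1}-\F_b)$ involving reduced resolvents), not only from $\Ran\F_b$, and reducing these to the nested-commutator expression \eqref{GenChern} demands nontrivial sum rules --- this is precisely the computation the paper states it wants to avoid. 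Note also that the ``obvious'' linear term $\iu\epsilon\phi(\x,\x')\F_b(\x;\x')$ vanishes identically on the diagonal since $\phi(\x,\x)=0$, so the entire first-order contribution to $\Ids$ sits in the ``integral correction'' you leave unspecified.

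The paper's proof takes a different and more economical route whose key device is absent from your proposal: it uses \emph{only} the zeroth-order kernel estimate \eqref{hc5}, the projection identity $\F_b^2=\F_b$, and the composition rule \eqref{CompositionRule}. One forms the almost-projection $\widetilde{\F}^{(\epsilon)}$ with kernel $e^{\iu\epsilon\phi(\x,\x')}\F_b(\x;\x')$, corrects it to a true projection $\mathcal{P}^{(\epsilon)}$ through the defect operator $\Delta^{(\epsilon)}=(\widetilde{\F}^{(\epsilon)})^2-\widetilde{\F}^{(\epsilon)}$ and the expansion of $(\mathbf{1}+4\Delta^{(\epsilon)})^{-1/2}$, shows via a Kato--Nagy unitary with kernel localized near the diagonal that $\mathcal{P}^{(\epsilon)}$ and $\F_{b+\epsilon}$ have the same integrated density of states (Lemma \ref{MainLemma}), and then reads off the first-order diagonal of $\mathcal{P}^{(\epsilon)}$: the term $-2\widetilde{\F}^{(\epsilon)}\Delta^{(\epsilon)}$ is what yields $\epsilon\,(\iu\F_b[[X_1,\F_b],[X_2,\F_b]])(\x;\x)$, with $\Delta^{(\epsilon)}(\x;\x)=\mathcal{O}(\epsilon^2)$. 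Without either this device or an actual execution of the resolvent sum-rule computation you allude to, your derivation of \eqref{StredaFormula} is incomplete.
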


As a by-product of the above gap labelling theorem, we can obtain some additional information about how singular the magnetic perturbation is. To formulate the statement more precisely, we will need the notion of a \emph{localized Wannier-like basis}. These are particular orthonormal bases for the range of the Fermi projection: the definition we choose below is motivated by \cite{CorneanMonacoMoscolari}, where such localized Wannier-like bases are explicitly constructed for Fermi projections corresponding to magnetic Hamiltonians with rational flux, and certain irrational perturbations thereof. 
In the present context, we say that the Fermi projection $\Pi_b$ admits a localized Wannier-like basis if 
\begin{equation} \label{Wannier-like}
\Pi_b(\x;\x') = \sum_{j=1}^{M} \sum_{\gamma \in \Z^2} \psi_{j,\gamma}(\x) \, \overline{\psi_{j,\gamma}(\x')}, \quad \psi_{j,\gamma}(\x) := e^{\iu \theta(\x)} \, \tau_{b',\gamma} w_j(\x),  
\end{equation}
where $M \in \N$, $\theta \colon \R^2 \to \R$, $b' \in \R$, and the functions $w_j \in L^2(\R^2)$ are such that the vectors $\left\{\psi_{j,\gamma}\right\}_{1 \le j \le M, \gamma \in \Z^2}$ are orthonormal and
\begin{equation*}
\sup_{j} \int_{\x \in \R^2} |w_j(\x)|^2 e^{2 \alpha \|\x\| } \mathrm{d}\x < K,
\end{equation*}
for some $K,\alpha >0$. It is understood that the parameter $b'$ and the function $\theta$ are chosen so that the magnetic covariance \eqref{IntertwMagTrs} holds.

From the previous $L^2$ estimate we can extract an $L^\infty$ estimate on the Wannier-like functions. First consider that $\psi_{j,0}=\Pi_{b}\psi_{j,0}$; then by  using the Cauchy-Schwarz inequality together with \eqref{hc2}, we obtain that
\begin{equation} \label{WannierExpLoc}
|w_j(\xx + \gamma)| \le C e^{-\alpha \|\gamma\|}
\end{equation}
for some $C>0$ uniform in $j \in \set{1, \ldots, M}$ and $\xx \in \Omega$. Moreover, notice that \eqref{WannierExpLoc} is a pointwise estimate since the functions $\psi_{j,\gamma}$ admits a continuous representative. Indeed, the functions $\psi_{j,\gamma}$ belong to the domain of the Bloch-Landau Hamiltonian, which is included in the local Sobolev space $H_{\mathrm{loc}}^2(\R^2)$. Therefore, by a standard Sobolev embedding argument, one can show that all the functions in the domain of the Hamiltonian admit a continuous representative.

\begin{corollary} \label{crl:Main}
	Under the same assumptions as in Theorem \ref{thm:Main}, the map $(b_1,b_2)\ni b\mapsto \Pi_b$ is continuous in the strong topology.
	
	On the other hand, assume that for $b\in (b_1,b_2)$ either {\rm (i)} $c_1\ne 0$ in \eqref{hc1'}, or {\rm (ii)}  $c_1=0$ and $\Pi_b$ admits a localized Wannier-like basis. Then:
	\begin{align}\label{hc40}		
	\lim_{\epsilon\to 0}\| \F_{b+\epsilon}-\F_b\|=1.
	\end{align}
	In particular, \eqref{hc40} holds true if $c_1=0$ and $b/(2\pi)$ is rational.
\end{corollary}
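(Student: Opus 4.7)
For the strong-continuity part, I would combine the Riesz formula for $\F_b$ with the resolvent identity. On any $\psi\in C_c^\infty(\R^2)$ one writes
\[
(\F_{b+\epsilon}-\F_b)\psi=\frac{\iu}{2\pi}\oint_{\mathcal{C}}(H_{b+\epsilon}-z)^{-1}(H_b-H_{b+\epsilon})(H_b-z)^{-1}\psi\,dz.
\]
Since $H_b-H_{b+\epsilon}=\epsilon L_1+\epsilon^2 L_0$ with $L_1$, $L_0$ polynomial-coefficient differential operators of orders $\le 1$ and $0$, and since $(H_b-z)^{-1}\psi$ lies in $H^2(\R^2)$ with Combes--Thomas exponential decay, the integrand has $L^2$-norm $O(\epsilon)$ uniformly for $z\in\mathcal{C}$. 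Hence $\F_{b+\epsilon}\psi\to\F_b\psi$ on the dense subspace $C_c^\infty(\R^2)$, and the uniform bound $\|\F_b\|\le 1$ extends this to strong continuity on all of $L^2(\R^2)$.

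For \eqref{hc40}, the plan is to exhibit, for each small $\epsilon$, a vector $\psi_\epsilon\in\Ran(\F_b)$ which is nearly annihilated by $\F_{b+\epsilon}$. Either branch of the hypothesis supplies a nonzero, exponentially localized $v\in\Ran(\F_b)$: in the Wannier case $v=\psi_{j,0}$ from \eqref{Wannier-like}; if $c_1\ne 0$, take $v=\F_b\xi$ for some $\xi\in C_c^\infty(\R^2)$ with $\F_b\xi\ne 0$, whose exponential decay comes from \eqref{hc2}. For $\gamma\in\Z^2$ set $\psi_{\epsilon,\gamma}:=\tau_{b,\gamma}v\in\Ran(\F_b)$ and use the identity $\tau_{b+\epsilon,\gamma}=\mathrm{e}^{\iu\epsilon\phi(\cdot,\gamma)}\tau_{b,\gamma}$ together with the magnetic covariance \eqref{IntertwMagTrs} of $\F_{b+\epsilon}$. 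A direct kernel computation, exploiting that $\phi(\cdot,\gamma)$ is linear in its first argument, yields
\[
\bigl\|(\F_{b+\epsilon}-\F_b)\psi_{\epsilon,\gamma}\bigr\|^2=\bigl\|M_{-\kk}\,\F_{b+\epsilon}\,M_{\kk}\,v-v\bigr\|^2,\qquad\kk=\tfrac{\epsilon}{2}(\gamma_2,-\gamma_1),
\]
with $M_\kk$ the multiplication operator by $\mathrm{e}^{\iu\kk\cdot\x}$.

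Fixing $\kk_0\in\R^2$ and choosing $\gamma(\epsilon)\in\Z^2$, necessarily with $|\gamma(\epsilon)|\sim 1/|\epsilon|$, so that $\kk(\epsilon,\gamma(\epsilon))\to\kk_0$, the strong continuity of $\F_{b+\epsilon}$ just established and of $\kk\mapsto M_\kk$ allow me to pass to the limit:
\[
\lim_{\epsilon\to 0}\bigl\|(\F_{b+\epsilon}-\F_b)\psi_{\epsilon,\gamma(\epsilon)}\bigr\|^2=\|M_{-\kk_0}\F_b M_{\kk_0}v-v\|^2=\|v\|^2-\|\F_b M_{\kk_0}v\|^2,
\]
the last equality using unitarity of $M_{\kk_0}$ and $\F_b^2=\F_b$. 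The remaining ingredient is the Riemann--Lebesgue-type decay $\|\F_b M_{\kk_0}v\|\to 0$ as $|\kk_0|\to\infty$, which I would prove by repeated integration by parts in the kernel integral $\int\F_b(\x,\y)\mathrm{e}^{\iu\kk_0\cdot\y}v(\y)\,d\y$, using smoothness of $\F_b(\x,\y)$ (a Riesz projector of a Schr\"odinger-type operator has $C^\infty$ kernel) and exponential decay of both $v$ and the kernel to deliver pointwise decay with a uniform $L^2$-in-$\x$ envelope. This is the main technical hurdle. Granting it, for each $\delta>0$ one picks $\kk_0$ with $\|\F_b M_{\kk_0}v\|^2<\delta\|v\|^2$, obtaining $\liminf_{\epsilon\to 0}\|\F_{b+\epsilon}-\F_b\|\ge\sqrt{1-\delta}$; together with the trivial $\|\F_{b+\epsilon}-\F_b\|\le 1$ for orthogonal projections, letting $\delta\to 0$ proves \eqref{hc40}.

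For the final statement with $c_1=0$ and $b/(2\pi)=p/q\in\Q$, the magnetic translations on the coarsened sublattice $(q\Z)^2$ commute without any residual Peierls phase, so $\F_b$ is genuinely periodic on that sublattice and admits an ordinary Bloch--Floquet decomposition over an enlarged magnetic Brillouin zone. The associated Bloch bundle has first Chern class proportional to $c_1=0$ and is therefore trivial; a global smooth frame exists, and its inverse Fourier transform produces the required exponentially localized Wannier-like basis \eqref{Wannier-like}, reducing the rational-flux case to the Wannier branch already treated.
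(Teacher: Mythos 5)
Your overall strategy for \eqref{hc40} is sound and is in fact a close cousin of the paper's argument for the Wannier branch: both use translated, exponentially localized test vectors and end with a Riemann--Lebesgue argument in the dual variable $\kk\sim\epsilon\gamma$. The identity $\|(\F_{b+\epsilon}-\F_b)\tau_{b,\gamma}v\|^2=\|v\|^2-\|\F_{b+\epsilon}M_{-\kk}v\|^2$ (which follows from $\F_b\tau_{b,\gamma}v=\tau_{b,\gamma}v$, from $\tau_{b,\gamma}=M_{-\kk}\tau_{b+\epsilon,\gamma}$, and from conjugating by $\tau_{b+\epsilon,\gamma}$, using $\phi(\x+\gamma,\gamma)=\phi(\x,\gamma)$ and $\kk\cdot\gamma=0$) checks out. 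Where you genuinely diverge from the paper: for $c_1\neq 0$ the paper does not use test vectors at all, but argues by contradiction --- if $\|\F_{b+\epsilon_n}-\F_b\|<1$ along some sequence, the Kato--Nagy unitary and Lemma \ref{MainLemma} would force $\Ids(\F_{b+\epsilon_n})=\Ids(\F_b)$, contradicting the nonzero slope in \eqref{hc1'}. Your unified treatment via $v=\F_b\xi$ is more hands-on. For strong continuity the paper uses the kernel estimate \eqref{hc5} plus $|e^{\iu\epsilon\phi(\x,\x')}-1|\le|\epsilon|\,\|\x-\x'\|\,\|\x'\|/2$ on compactly supported $\psi$; your resolvent-identity route works too, but only because the growing coefficients of $H_b-H_{b+\epsilon}$ are tamed by the weighted resolvent estimates \eqref{IntKernelResolvent} and \eqref{hc26} --- you should cite those rather than generic ``Combes--Thomas decay''.

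Two substantive issues. First, your ``main technical hurdle'' is attacked the wrong way: $V$ is only assumed bounded, so the kernel $\F_b(\x,\y)$ is \emph{not} $C^\infty$ and repeated integration by parts is unavailable. Fortunately no smoothness is needed: for fixed $\x$ the function $\y\mapsto\F_b(\x,\y)v(\y)$ is in $L^1$, so $(\F_bM_{\kk_0}v)(\x)\to 0$ pointwise by the ordinary Riemann--Lebesgue lemma, and \eqref{hc2} together with the exponential decay of $v$ gives a $\kk_0$-independent $L^2$ envelope, whence $\|\F_bM_{\kk_0}v\|\to0$ by dominated convergence. This is exactly how the paper handles its own Riemann--Lebesgue step. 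Second, and more importantly: as written, your argument never invokes either hypothesis ($c_1\neq0$, or $c_1=0$ with a Wannier-like basis) --- any nonzero exponentially localized $v\in\Ran\F_b$ exists unconditionally. Indeed, a fixed-$\epsilon$ variant of your own computation (let $|\gamma|\to\infty$ at fixed $\epsilon\neq0$ and apply Riemann--Lebesgue to $\F_{b+\epsilon}$ itself) would give $\|\F_{b+\epsilon}-\F_b\|=1$ for every $\epsilon\neq0$, with no case distinction. That is strictly stronger than the Corollary and would settle the conjecture stated in the Remark following it for irrational flux with $c_1=0$. I have not been able to locate an error in your covariance identity or in the limit passage, but proving more than the authors claim to be able to prove is a serious warning sign: before this can be accepted you must either exhibit explicitly where the hypotheses enter your argument, or argue convincingly that they are indeed superfluous for \eqref{hc40} (as opposed to for the existence of the Wannier-like basis, which is the genuinely hard part of the authors' conjecture). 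The final paragraph on rational flux is consistent with Appendix \ref{app:ChernAppendix}, except that the relevant enlarged lattice is $\Z\times q\Z$, not $(q\Z)^2$.
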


\begin{remark}
	Corollary \ref{crl:Main} highlights the  singularity of the magnetic perturbation: the map $(b_1,b_2)\ni b\mapsto \Pi_b$ is continuous in the strong topology, but dramatically fails to be continuous in the norm topology. 
	
	While the case $c_1\neq 0$ is a rather straightforward consequence of Theorem \ref{thm:Main}, the case $c_1=0$ is more involved. A less general situation of this latter case was already treated by Nenciu  \cite[Lemma 5.8]{Nenciu91}. He only considered $b=0$ while $\sigma_0$ was a simple absolutely continuous band for which he assumed the existence of an orthonormal basis of exponentially localized Wannier functions. The strategy behind our proof is essentially the same as Nenciu's, but we generalize his argument in particular to any rational flux, by showing in Appendix \ref{app:ChernAppendix} that also in this case one can construct an orthonormal basis of localized Wannier functions for the Fermi projection.
	
	Motivated by magnetic perturbation theory \cite{CorneanMonacoMoscolari}, we conjecture that \eqref{hc40} and the existence of a localized Wannier-like basis as in \eqref{Wannier-like} should also hold for all irrational fluxes when $c_1=0$.  
\end{remark}

\subsection{Slowly varying magnetic perturbations}

Here we discuss the generalization of the above Diophantine formula \eqref{hc1'} to magnetic field perturbations that are slowly varying with respect to the lattice $\Z^2$, in the sense of space adiabatic perturbation theory.

Let $\ASl(\x)=(\ASl_1(x_1,x_2),\ASl_2(x_1,x_2))$ be a $C^2$ magnetic  potential and define $B:=\partial_2 \ASl_1-\partial_1 \ASl_2$. {Up to a simple gauge transformation, we may assume that $\ASl({\bf 0})={\bf 0}$.} Moreover, we assume that $B$ is at least $C^1$ with bounded derivatives in the following way:
\begin{equation}
\label{BoundB}
\sup_{\x \in \R^2} | \partial^\alpha B(\x) | \leq C_\alpha\, , \qquad \alpha\in \mathbb{N}^2,\quad |\alpha|\leq 1.
\end{equation}
{On top of that, we require that in the limit of large scales the magnetic field has a convergent flux per unit area, that is we assume the existence of the limit}
	\begin{equation}
	\label{PeridicLimitFlux}
	\langle B \rangle:=\lim\limits_{L \to \infty} \frac{1}{|\Lambda_L|} \int_{\Lambda_L} B(\x) d \x .
	\end{equation}
{Without loss of generality, we assume that $\langle B \rangle\geq 0$.}

Let $0< \lambda \ll 1$ denote the slow variation parameter. 
Let us introduce $\ASl_\lambda(\x):=\ASl(\lambda \x)$. Then $\ASl_\lambda$ produces a  slowly varying magnetic field $B_\lambda(\x):=\lambda B(\lambda \x)$.  
Let us consider the perturbed Hamiltonian of the form
\begin{equation}\label{hc20}
H_{b,\lambda}:=\frac{1}{2}\left({\bf P} - b{\bf A} + \ASl_\lambda \right)^2 + V \, ,
\end{equation}
with $b$, ${\bf A}$ and $V$ as before. Up to a gauge transformation, we may assume that $\ASl_\lambda$ is given in the transverse gauge: 
\begin{equation}\label{tr-g}
\ASl_\lambda(\x)= \left(\int_0^1 s B_\lambda (s\x)ds \right)\; (-x_2,x_1).
\end{equation}

$H_{b,\lambda}$ remains essentially selfadjoint on $C^\infty_{0}(\R^2)$. Like in the previous section, we assume that $H_{b,0}$ has an isolated spectral island $\sigma_{b,0}$. Since the perturbing magnetic field is of order $\lambda$, then for $\lambda$ small enough the perturbation given by $\ASl_\lambda$ does not close the gap between $\sigma_{b,0}$ and the rest of the spectrum \cite{Nenciu86,Nenciu02} (see also Appendix \ref{app:Kernel}). Thus $H_{b,\lambda}$ still has a spectral island $\sigma_{b,\lambda}$ ``close to'' $\sigma_{b,0}$. Via a Riesz integral we can define $\Pi_{b,\lambda}$ to be the spectral projection onto the spectral island $\sigma_{b,\lambda}$.

The operator $H_{b,\lambda}$ is not necessarily covariant anymore (i.e., it need not commute with some magnetic translations) and we can no longer be sure that $\Pi_{b,\lambda}$ admits an integrated density of states in the sense of \eqref{Ids}, namely the existence of the limit
\[ \lim_{L \to \infty} p^L_{b,\lambda}, \quad \text{where} \quad  p^L_{b,\lambda} := \frac{1}{|\Lambda_L|}  \Tr\left(\chi_L \Pi_{b,\lambda} \right) \]
is not always guaranteed. Nevertheless, the $\liminf$ and $\limsup$ of $p^L_{b,\lambda}$ always exist because the sequence is bounded in $L$ (see also \eqref{hc3}).

Now we are ready to state the second main result of our paper.

\begin{theorem}
	\label{thm:StredaLimSupInf}
	Let $\Pi_{b,\lambda}$ be the spectral projection defined above. Assume that the limit defining $\langle B \rangle$ in  \eqref{PeridicLimitFlux} exists. Denote by $I_\lambda$ either $\displaystyle\limsup_{L\to\infty} p^L_{b,\lambda}$ or $\displaystyle\liminf_{L\to\infty} p^L_{b,\lambda}$. Then
	\begin{align}
	\label{StredaLimSup}
	&I_\lambda = \Ids(\Pi_{b,0}) + \lambda \frac{\langle B \rangle}{2\pi} \Chern(\Pi_{b,0}) + \mathcal{O}(\lambda^2) \, .
	\end{align}
\end{theorem}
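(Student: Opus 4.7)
My plan is to combine the Riesz representation of the spectral projection with the gauge covariant magnetic perturbation theory reviewed in Appendix \ref{app:Kernel}, applied to the slowly varying perturbation $\ASl_\lambda$, and then average the resulting diagonal kernel over $\Lambda_L$, recognizing the Chern character of $\Pi_{b,0}$ as the coefficient of the first-order term.

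First I would write $\Pi_{b,\lambda} = \frac{\iu}{2\pi} \oint_{\mathcal{C}} (H_{b,\lambda}-z)^{-1} \di z$, with the same contour $\mathcal{C}$ used for $\Pi_{b,0}$, which remains admissible for $\lambda$ small since the slowly varying perturbation does not close the gap \cite{Nenciu02}. Using the key gauge covariant expansion from the magnetic perturbation theory (which treats the unbounded potential $\ASl_\lambda$ by factoring out a Peierls-type phase along straight segments), I would obtain a kernel representation of the form
\begin{equation*}
(H_{b,\lambda}-z)^{-1}(\x;\x') = e^{\iu \phi_{\ASl_\lambda}(\x,\x')}\bigl[(H_{b,0}-z)^{-1}(\x;\x') + \lambda K_1(\x,\x';z) + \lambda^2 K_2(\x,\x';z;\lambda)\bigr],
\end{equation*}
where both $K_1$ and $K_2$ decay exponentially off the diagonal, uniformly in $z \in \mathcal{C}$ and in $\lambda$ sufficiently small. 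Performing the contour integral, setting $\x = \x'$ (so the Peierls phase trivializes) and organizing terms, this yields the pointwise expansion
\begin{equation*}
\Pi_{b,\lambda}(\x;\x) = \Pi_{b,0}(\x;\x) + \lambda \, T(\x) + \mathcal{O}(\lambda^2), \quad \text{uniformly in } \x \in \R^2.
\end{equation*}

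Next I would identify $T(\x)$. The heuristic is a local Str\v{e}da principle: in a box of size $\mathcal{O}(1)$ around $\x$, the perturbing field $B_\lambda$ is approximately the constant $\lambda B(\lambda \x)$ by \eqref{BoundB}, so $H_{b,\lambda}$ looks locally like $H_{b+\lambda B(\lambda\x),0}$. Following the derivation of \eqref{StredaFormula} in Theorem \ref{thm:Main}\ref{(i)}, but applied to this localized comparison (and using exponential localization of all kernels to bound the error from the non-constancy of $B$ by $\lambda \sup |\nabla B|$), I would show $T(\x) = B(\lambda \x)\, \rho(\x)$, where $\rho$ is the $\Z^2$-periodic ``Chern density'' whose unit cell integral reads $\int_\Omega \rho(\xx)\di\xx = \frac{1}{2\pi} \Chern(\Pi_{b,0})$.

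With this in hand, averaging yields
\begin{equation*}
p^L_{b,\lambda} = \frac{1}{|\Lambda_L|}\int_{\Lambda_L} \Pi_{b,0}(\x;\x)\di\x + \frac{\lambda}{|\Lambda_L|}\int_{\Lambda_L} B(\lambda \x)\rho(\x)\di\x + \mathcal{O}(\lambda^2).
\end{equation*}
The first term converges to $\Ids(\Pi_{b,0})$ as $L\to\infty$ by $\Z^2$-periodicity of $\Pi_{b,0}(\x;\x)$; the second, by combining the periodicity of $\rho$ with the definition \eqref{PeridicLimitFlux} of $\langle B\rangle$ (a standard two-scale averaging argument: partition $\Lambda_L$ into unit cells, factor the integral using near-constancy of $B$ on each cell up to an error controlled by $\sup|\nabla B|/L$), converges to $\langle B\rangle\cdot\frac{1}{2\pi}\Chern(\Pi_{b,0})$. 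Since the $\mathcal{O}(\lambda^2)$ remainder is uniform in $\x$ and hence in $L$, both $\limsup$ and $\liminf$ in $L$ satisfy \eqref{StredaLimSup}.

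The main obstacle is the second step: rigorously extracting the coefficient $T(\x) = B(\lambda\x)\rho(\x)$ from the first-order term of magnetic perturbation theory, rather than a less explicit integral expression. The clean identification relies on the observation that the Dyson-type first-order correction involves expressions of the form $\Pi_{b,0}[X_j,\Pi_{b,0}]\cdots[X_k,\Pi_{b,0}]\Pi_{b,0}$ paired with values of $B_\lambda$ at (essentially) the diagonal point $\x$, which is exactly the integrand appearing in \eqref{GenChern}; the slow variation hypothesis \eqref{BoundB} ensures that replacing $B_\lambda$ at nearby off-diagonal points by its diagonal value $\lambda B(\lambda\x)$ costs only $\mathcal{O}(\lambda^2)$ once the exponential decay of kernels is used.
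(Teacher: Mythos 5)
Your overall architecture (factor out a Peierls-type phase, identify the first-order coefficient of the diagonal kernel with $B(\lambda\x)$ times a periodic Chern density, then do a two-scale average against \eqref{PeridicLimitFlux}) matches the shape of the result, and your final averaging step is essentially the paper's. However, the central step --- the identification $T(\x)=B(\lambda\x)\rho(\x)$ with $\int_\Omega\rho=\tfrac{1}{2\pi}\Chern(\Pi_{b,0})$ --- is exactly where your route breaks down, and it is the step the paper deliberately avoids. If you expand the resolvent kernel \`a la Dyson and perform the Riesz contour integral, the first-order coefficient $K_1$ involves $\mathcal{A}_\lambda(\x,\x')\cdot({\bf P}-b{\bf A})(H_{b,0}-z{\bf 1})^{-1}(\x;\x')$ integrated over the contour; after the $z$-integration this retains contributions from \emph{all} spectral subspaces of $H_{b,0}$, not only from $\Ran\Pi_{b,0}$. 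It is not of the form $\Pi_{b,0}[X_1,\Pi_{b,0}][X_2,\Pi_{b,0}]$ as you assert; reducing it to the projection-only expression \eqref{GenChern} requires nontrivial sum rules, which is precisely what the authors flag in Section \ref{sect2} as ``computationally involved'' and refuse to do. Relatedly, your claimed pointwise expansion $\Pi_{b,\lambda}(\x;\x)=\Pi_{b,0}(\x;\x)+\lambda T(\x)+\mathcal{O}(\lambda^2)$ with an explicit $T$ is stronger than anything the available a priori estimate \eqref{hc30} delivers: that estimate only controls the difference with the phase-dressed kernel to order $\lambda$, with no identification of the first-order term.

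The paper's proof sidesteps both problems. From $\widetilde{\F}_\lambda$ (kernel $e^{\iu\phi_\lambda}\Pi_{b,0}$) it builds an auxiliary \emph{true} projection $\Fe^{(\lambda)}$ via the Nagy formula \eqref{ProjEps}, shows $\F_\lambda=\mathsf{U}^{(\lambda)}\Fe^{(\lambda)}\mathsf{U}^{(\lambda)*}$ with a Kato--Nagy unitary whose kernel is localized near the diagonal, and invokes Lemma \ref{MainLemma} to conclude that $\F_\lambda$ and $\Fe^{(\lambda)}$ have the same $\limsup$ and $\liminf$ of $p^L_{b,\lambda}$; this is how the uncontrolled $\mathcal{O}(\lambda)$ pointwise discrepancy is converted into a vanishing contribution to the averaged trace. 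The first-order term is then read off from $\Delta^{(\lambda)}=\widetilde{\F}_\lambda^2-\widetilde{\F}_\lambda$, whose kernel involves only $\Pi_{b,0}$ and the flux of $B_\lambda$ through triangles; the Lipschitz bound following from \eqref{BoundB} lets one freeze $B_\lambda$ at the diagonal point, producing $\lambda B(\lambda\x)$ times the integrand of \eqref{GenChern} up to $\mathcal{O}(\lambda^2)$. To repair your argument you would either have to import this auxiliary-projection construction (at which point your proof becomes the paper's), or actually establish the sum rules that convert the Dyson first-order term into the Chern density.
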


\begin{remark}
	Theorem \ref{thm:StredaLimSupInf} says that  even if the integrated density of states might not exist, the first order terms in $\lambda$ of $\limsup_L p^L_{b,\lambda}$ and $\liminf_L p^L_{b,\lambda}$ are equal and proportional to the Chern marker of the unperturbed projection, thus the possible failure in the existence of an integrated density of states is only quadratic in $\lambda$.  
\end{remark}

\section{Proof of Theorem \ref{thm:Main} and Corollary \ref{crl:Main}}\label{sect2}
\subsection{Proof of \eqref{(i)}} Let us fix some $b\in (b_1,b_2)$ and assume that $\epsilon\neq 0$ is such that $b+ \epsilon\in (b_1,b_2)$. Proving \eqref{StredaFormula} is equivalent to showing
\begin{equation} \label{expansion}
\Ids(\F_{b+\epsilon})= \Ids(\F_{b}) + \frac{\epsilon}{2\pi} \Chern(\F_b) +o(\epsilon)\, , \quad \epsilon \to 0 \, .
\end{equation}
It is well known in the literature \cite{CorneanNenciu98, Nenciu02} that the constant magnetic field induces a  singular perturbation. 
Fortunately, in order to compute $\Ids(\F_{b+\epsilon})$ we only need a good control on the diagonal value $\F_{b+\epsilon}(\x;\x)$ of the integral kernel. The gauge covariant magnetic perturbation theory provides us with a convergent expansion in $\epsilon$ of exactly such objects. 

First of all, we define the operator $\widetilde{\F}^{(\epsilon)}$ given by the following integral kernel:
\begin{equation}\label{hc4}
\widetilde{\F}^{(\epsilon)}(\x;\x') = e^{\iu \epsilon \phi(\x,\x')} \F_{b}(\x;\x') \, .
\end{equation}
Note that the operator $\widetilde{\F}^{(\epsilon)}$ is selfadjoint due to the antisymmetry of the Peierls phase defined in \eqref{Peierlsphase} and to the selfadjointness of $\Pi_b$.

Using the gauge covariant magnetic perturbation theory as in \cite{Nenciu02} (see also Appendix \ref{app:Kernel}) one can show that there exist two constants $\alpha,K>0$ such that 
\begin{align}\label{hc5}
\left|\F_{b+\epsilon}(\x;\x')-\widetilde{\F}^{(\epsilon)}(\x;\x') \right| \leq |\epsilon| K e^{-\alpha\|\x-\x'\|}\,.
\end{align}
In fact we could give an explicit formula for the difference in the left-hand side in all orders of $\epsilon$, but the expression is  complicated and contains contributions coming from all spectral subspaces of $H_{b}$, not just from the one corresponding to $\Pi_b$. Using such an exact formula in order to show that the first order contribution in $\epsilon$ to $\Ids(\Pi_{b+\epsilon})$  is proportional to $\Chern(\Pi_b)$ seems to be computationally involved and would surely demand the use of many rather obscure identities and sum rules. 

Instead, to prove \eqref{expansion} we will use a quite different strategy {\bf which only involves the integral kernel of $\Pi_b$, the knowledge that $\Pi_b$ is a projection,  and the a-priori zero-order estimate \eqref{hc5}}. This strategy consists of two steps:

\begin{enumerate}[leftmargin=*,label={{\it Step \arabic*.}}]
	\item {Using the fact that $\widetilde{\F}^{(\epsilon)}$ is an ``almost'' projection, we will explicitly construct an auxiliary ``true'' projection $\mathcal{P}^{(\epsilon)}$ which, for $|\epsilon|$ small enough, is unitarily equivalent to $\F_{b+\epsilon}$ through an unitary operator that satisfies the hypothesis of Lemma \ref{MainLemma} (see below). As a consequence, we will show that $\mathcal{P}^{(\epsilon)}$ has {\bf the same} integrated density of states as $\F_{b+\epsilon}$. }
	
	\item We will study the asymptotic behavior in $\epsilon$ of the integrated density of states of $\mathcal{P}^{(\epsilon)}$ and show that
	\begin{align}\label{hc6}
	\Ids(\mathcal{P}^{(\epsilon)})= \Ids(\mathcal{P}^{(0)}) + \frac{\epsilon}{2\pi} \Chern(\F_b) +o(\epsilon)\, , \quad \epsilon \to 0 \, .
	\end{align}	
\end{enumerate}

\subsubsection{Step 1} 
Define the operator  
\begin{equation*}
\Delta^{(\epsilon)}:=\big(\widetilde{\F}^{(\epsilon)}\big)^2-\widetilde{\F}^{(\epsilon)} \, .
\end{equation*}
The operator $\Delta^{(\epsilon)}$ measures how far $\widetilde{\F}^{(\epsilon)}$ is from being a projection.
Using \eqref{hc2} and \eqref{CompositionRule} one can prove (see \eqref{eqn:Delta} below and also \cite[Section 9.3]{CorneanMonacoMoscolari}) that if $\epsilon$ is small enough then
\begin{equation}
\label{DeltaKer}
|\Delta^{(\epsilon)}(\x;\x')| \leq  |\epsilon | K e^{-\alpha\|\x-\x'\|} \, ,
\end{equation}
where $K$ is a positive constant. Notice that in the following $K$ will denote a generic positive constant. 

Thus, for $\epsilon $ small enough, we can construct the following orthogonal projections (see also  \cite{Nenciu02} for more details):
\begin{align}\label{ProjEps1}
\mathcal{P}^{(\epsilon)}:=\widetilde{\F}^{(\epsilon)}+\big (\widetilde{\F}^{(\epsilon)}-\tfrac{1}{2} {\bf 1}\big )\big \{ ({\bf 1}+4\Delta^{(\epsilon)})^{-1/2}-{\bf 1}\big \} \, .
\end{align}
Since the integral kernel of $\Delta^{(\epsilon)}$ is exponentially localized and of order $\epsilon$, one can prove (see \cite[Lemma 8.5]{CorneanMonacoMoscolari}) that 
\begin{equation}
\label{Aux}
\left| \big \{ ({\bf 1}+4\Delta^{(\epsilon)})^{-1/2}-{\bf 1} +2 \Delta^{(\epsilon)}\big \}(\x;\x') \right| \leq \epsilon^2 K e^{-\alpha\|\x-\x'\|}
\end{equation}
This estimate, combined with definition \eqref{ProjEps1} and with \eqref{hc5}, yields the following pointwise estimate:
\begin{equation}
\label{GoodLoc}
\left|\left(\F_{b+\epsilon}-\mathcal{P}^{(\epsilon)}\right)(\x;\x') \right| \leq K |\epsilon|\;  e^{-\alpha\|\x-\x'\|}\, .
\end{equation}
Due to \eqref{GoodLoc} we have that $\|\F_{b+\epsilon}-\mathcal{P}^{(\epsilon)}\|\leq C |\epsilon|\leq 1/2$ when $|\epsilon|$ is sufficiently small, hence we can consider the Kato-Nagy unitary operator $\mathsf{U}_\epsilon$ \cite{Kato66} such that $\F_{b+\epsilon}=\mathsf{U}_\epsilon\mathcal{P}^{(\epsilon)}\mathsf{U}_\epsilon^*$. From its explicit expression one can obtain the following estimate (see \cite[Lemma 8.5]{CorneanMonacoMoscolari}, c.f. also Appendix \ref{app:KatoNagy}):
\begin{equation}
\label{RequirementForU}
\left|\left(\mathsf{U}_\epsilon-\Id\right)(\x;\x')\right| \leq  C e^{-\alpha\|\x-\x'\|} \, , 
\end{equation}
which holds	for some positive constants $C$ and $\alpha$, provided $|\epsilon|$ is small enough.

Now we prove that $\F_{b+\epsilon}$ and $\mathcal{P}^{(\epsilon)}$ have the same integrated density of states if $|\epsilon|$ is small enough. In order to do that, we use the following general lemma.

\begin{lemma}
	\label{MainLemma}
	Let $P_1$ and $P_2$ be two orthogonal projections such that their integral kernels satisfy \eqref{hc2}. Assume that there exists a unitary operator $U$ such that $P_1=U P_2 U^*$ and whose integral kernel satisfies \eqref{RequirementForU}.  
	
	Then we have
	\[
	\lim\limits_{L \to \infty} \frac{1}{|\Lambda_L|} \left \vert \Tr(\chi_L P_1)-\Tr(\chi_L P_2)\right \vert =0\, .
	\]
	{In particular, if one projection admits an integrated density of states as in \eqref{Ids}, then both of them do and $\Ids(P_1)=\Ids(P_2)$.}
\end{lemma}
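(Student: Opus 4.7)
The plan is to exploit the unitary equivalence $P_1=UP_2U^*$ to rewrite the truncated trace difference as a single trace whose kernel is supported in an $O(1)$-neighbourhood of $\partial\Lambda_L$; the resulting estimate will then be of order $|\partial\Lambda_L|=O(L)$ rather than $|\Lambda_L|=O(L^2)$, which after division by $L^2$ gives the claim. The algebraic identity
\[
\chi_L P_1 \;=\; \chi_L U P_2 U^* \;=\; U\chi_L P_2 U^* + [\chi_L,U]\,P_2 U^*
\]
exhibits $[\chi_L,U]P_2U^*$ as a difference of trace class operators: $\chi_L P_1$ is trace class by the same argument as in \eqref{hc3}, while $\chi_L P_2$ is trace class for the same reason and hence so is its unitary conjugate $U\chi_L P_2 U^*$, with $\Tr(U\chi_L P_2 U^*)=\Tr(\chi_L P_2)$. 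Taking traces therefore gives
\[
\Tr(\chi_L P_1)-\Tr(\chi_L P_2)=\Tr\bigl([\chi_L,U]\,P_2 U^*\bigr)=\Tr\bigl([\chi_L,U-\Id]\,P_2 U^*\bigr),
\]
where the second equality uses $[\chi_L,\Id]=0$ to isolate the exponentially localised part of $U$.

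Next I estimate this trace via its (continuous) diagonal kernel. The commutator has integral kernel $(\chi_L(\x)-\chi_L(\y))(U-\Id)(\x;\y)$, which by \eqref{RequirementForU} is bounded by $C|\chi_L(\x)-\chi_L(\y)|e^{-\alpha\|\x-\y\|}$; convolving the bound \eqref{hc2} for $P_2$ with $U^*=\Id+(U^*-\Id)$ gives an analogous decay $|(P_2 U^*)(\z;\y)|\le C'e^{-\alpha'\|\z-\y\|}$ for some $\alpha'>0$. Writing the trace as the integral of the diagonal, $\int ([\chi_L,U-\Id]\,P_2U^*)(\x;\x)\,\di\x$, and combining the two pointwise estimates yields
\[
\bigl|\Tr(\chi_L P_1)-\Tr(\chi_L P_2)\bigr|\;\le\; C''\iint|\chi_L(\x)-\chi_L(\z)|\,e^{-\beta\|\x-\z\|}\,\di\x\,\di\z
\]
with $\beta=\min(\alpha,\alpha')$. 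Since $|\chi_L(\x)-\chi_L(\z)|$ vanishes unless $\x$ and $\z$ lie on opposite sides of $\partial\Lambda_L$, a coarea-type argument (integrate first in $\z-\x$, then in the perpendicular distance to $\partial\Lambda_L$) bounds the double integral by $C'''|\partial\Lambda_L|=O(L)$. Dividing by $|\Lambda_L|=L^2$ and letting $L\to\infty$ proves the first statement, and the ``in particular'' assertion on integrated densities of states follows at once.

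The only truly delicate step is the algebraic identity above: one must recognise that the right operator to trace is the commutator $[\chi_L,U]P_2 U^*$ and justify that it is trace class, which is not obvious directly since its individual factors are not nuclear. This is handled by viewing it as the difference of two trace class operators, as explained above. The hypothesis \eqref{IntertwMagTrs} on $U$ is not actually needed in the estimate itself, but ensures compatibility with the covariant setup in which $P_1$ and $P_2$ have $\Z^2$-periodic diagonal kernels and $\Tr(\chi_L P_i)$ really does grow like $L^2$, so that the conclusion is nontrivial.
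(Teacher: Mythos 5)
Your proof is correct and follows essentially the same route as the paper's: the same commutator identity $\Tr(\chi_L P_1)-\Tr(\chi_L P_2)=\Tr([\chi_L,U]P_2U^*)$, the same isolation of the exponentially localized part $W=U-\Id$ (the paper splits $P_2U^*$ into $P_2+P_2W^*$ where you bound the combined kernel directly, an immaterial difference), and the same boundary-layer estimate giving $O(L)$ against $|\Lambda_L|=L^2$. Your observation that \eqref{IntertwMagTrs} is not used in the estimate itself is also consistent with the paper's argument.
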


\begin{proof} Reasoning as in \eqref{hc3} we observe that the operators $\chi_L P_1$, $\chi_L P_2$ and $\chi_L U P_2$ are trace class. {Exploiting the invariance of the trace under unitary conjugation,} we obtain the identity
	$$\Tr(\chi_L P_1)-\Tr(\chi_L P_2)=\Tr \left ( [\chi_L,U]P_2U^*\right ).$$
	Denoting by $W:=U-\Id$ we see from \eqref{RequirementForU} that $W(\x;\x')$ is exponentially localized near the diagonal and
	$$\Tr \left ( [\chi_L,U]P_2U^*\right )=\Tr \left ( [\chi_L,W]P_2\right )+\Tr \left ( [\chi_L,W]P_2 W^*\right ).$$
	Both traces can be bounded by a double  integral of the type 
	$$\int_{\x\in \R^2}\int_{{\x'}\in \R^2}e^{-\alpha \Vert \x-\x'\Vert}|\chi_L(\x)-\chi_L(\x')|d\x' d\x\; .$$
	In the above integral, the integrand is non-zero only if one variable belongs to $\Lambda_L$ and the other one lies outside $\Lambda_L$. Due to the symmetry, it is enough to estimate
	$$\int_{\x\in \Lambda_L}\int_{{\x'}\in \R^2\setminus \Lambda_L}e^{-\alpha \Vert \x-\x'\Vert}d\x' d\x\; .$$
	For a fixed $\x\in \Lambda_L$ we have the inequality
	$$e^{-\alpha \Vert \x-\x'\Vert}\leq e^{-\alpha\;  {\rm dist}(\x,\partial \Lambda_L)/2}e^{-\alpha \Vert \x-\x'\Vert /2},\quad \forall \x'\in \R^2\setminus \Lambda_L.$$
	By integrating with respect to $\x'$ at fixed $\x$ we can bound the above double integral by
	$$\int_{\x\in \Lambda_L}e^{-\alpha\;  {\rm dist}(\x,\partial \Lambda_L)/2}d\x\leq C \;L,$$
	hence when dividing by $L^2 = |\Lambda_L|$ we obtain the claimed convergence to zero. 
\end{proof}

Using \eqref{IntertwMagTrs} one can prove by direct computation that the operator $\widetilde{\Pi}^{(\epsilon)}$ commutes with the magnetic translations 
$\tau_{b+\epsilon,\eta}$. Since $\mathcal{P}^{(\epsilon)}$ is a function of $\widetilde{\Pi}^{(\epsilon)}$, it also commutes with the same magnetic translations, thus $\widetilde{\Pi}^{(\epsilon)}(\x;\x)$ and $\mathcal{P}^{(\epsilon)}(\x;\x)$ are periodic functions and the integrated densities of states $\Ids(\widetilde{\Pi}^{(\epsilon)})$, $\Ids(\mathcal{P}^{(\epsilon)})$ exist. Due to \eqref{GoodLoc} and \eqref{RequirementForU} we can apply Lemma~\ref{MainLemma} to $\mathcal{P}^{(\epsilon)}$ and $\Pi_{b+\epsilon} = \mathsf{U}_\epsilon \mathcal{P}^{(\epsilon)} \mathsf{U}_\epsilon^*$ and conclude that 
$$  
\Ids(\F_{b+\epsilon})=\Ids(\mathcal{P}^{(\epsilon)}).
$$
\subsubsection{Step 2} We now prove \eqref{hc6}. Let us begin by studying $\mathcal{P}^{(\epsilon)}$ in detail. Using the same method yielding \eqref{GoodLoc} but taking into account also the term of order $\epsilon$ we obtain the estimate 
$$
\left |\mathcal{P}^{(\epsilon)}(\x;\x')-\left \{\widetilde{\F}^{(\epsilon)}-2\widetilde{\F}^{(\epsilon)}\Delta^{(\epsilon)} + \Delta^{(\epsilon)}\right \}(\x;\x')\right |\leq C\epsilon^2 e^{-\alpha \Vert \x-\x'\Vert}\, .
$$
This leads to
\begin{equation}
\label{epsExpansion}
\left |\mathcal{P}^{(\epsilon)}(\x;\x)-\mathcal{P}^{(0)}(\x;\x)-\left \{-2\widetilde{\F}^{(\epsilon)}\Delta^{(\epsilon)} + \Delta^{(\epsilon)}\right \}(\x;\x)\right |\leq C\epsilon^2,
\end{equation}
where we used that $\widetilde{\F}^{(\epsilon)}(\x;\x)=\F_b(\x;\x)=\mathcal{P}^{(0)}(\x;\x)$, independent of $\epsilon$.

Exploiting the composition rule for the Peierls phase \eqref{CompositionRule}, the fact that $\Pi_b$ is a projection, and the exponential localization of the integral kernel of $\Pi_b$, we obtain
\begin{equation} \label{eqn:Delta}
\begin{aligned}
&\Delta^{(\epsilon)}(\x;\x')=\int_{\R^2} d \y \, \left( e^{\iu \epsilon \phi(\x,\y)} \F_{b}(\x;\y) e^{\iu \epsilon \phi(\y,\x')} \F_{b}(\y;\x') - e^{\iu \epsilon \phi(\x,\x')}  \F_{b}(\x;\y)  \F_{b}(\y;\x')  \right) \\
&=\frac{\iu}{2}  e^{\iu \epsilon \phi(\x,\x')} \epsilon\int_{\R^2} d \y \left[(\x-\y)_2 (\y-\x')_1 - (\x-\y)_1 (\y-\x')_2 \right]  \F_{b}(\x;\y) \F_{b}(\y;\x') \\
&\phantom{=}+ \mathcal{O}(\epsilon^2 e^{-\alpha \Vert \x-\x'\Vert}) \, .
\end{aligned}
\end{equation}
Noticing that $\Delta^{(\epsilon)}(\x;\x)=0 + \mathcal{O}(\epsilon^2)$ it follows that only 
$-2\widetilde{\F}^{(\epsilon)}\Delta^{(\epsilon)}(\x;\x)$ contributes to the first order expansion in $\epsilon$ of $\mathcal{P}^{(\epsilon)}(\x;\x)$ in  \eqref{epsExpansion}. More precisely
\begin{align*}
&-2\left(\widetilde{\F}^{(\epsilon)}\Delta^{(\epsilon)}\right)(\x;\x)= -2\int_{\R^2} d\tx\,  e^{\iu \epsilon \phi(\x,\tx)} \F_{b}(\x;\tx) \Delta^{(\epsilon)}(\tx;\x)  \\
&= \iu \epsilon \int_{\R^2} d\tx\, \F_{b}(\x;\tx)  \int_{\R^2} d \y \left[ (\tx-\y)_1 (\y-\x)_2-(\tx-\y)_2 (\y-\x)_1  \right]  \F_{b}(\tx;\y) \F_{b}(\y;\x) \\
&\phantom{=}+\mathcal{O}(\epsilon^2)\\
&=\epsilon\left(\iu\F_{b} \left[ \left[X_1,\F_{b}\right] , \left[X_2,\F_{b}\right] \right]\right)(\x;\x) + \mathcal{O}(\epsilon^2).
\end{align*}
This proves \eqref{hc6}, see \eqref{GenChern}. 

{The proof of the continuity of $\Chern(\F_b)$ as a function of $b\in (b_1,b_2)$ uses the same strategy and we only sketch it. First,  we replace $\F_{b+\epsilon}$ with $\widetilde{\Pi}^{(\epsilon)}$ in the expression of $\Chern(\F_{b+\epsilon})$ and using \eqref{hc5} we obtain
	$$
	\left|\Chern(\F_{b+\epsilon})-\Chern(\widetilde{\Pi}^{(\epsilon)})\right| \leq C |\epsilon| .
	$$
	Second, using the composition rule \eqref{CompositionRule} for the magnetic phases in the explicit expression of $\Chern(\widetilde{\Pi}^{(\epsilon)})$, we get
	$$
	\Chern(\widetilde{\Pi}^{(\epsilon)})=\Chern(\F_b) +\mathcal{O}(\epsilon)\, , \quad \epsilon \to 0 \, .
	$$
	Combining these two estimates, the continuity of $\Chern(\F_b)$ follows.}

\subsection{Proof of \eqref{(ii)}}

From Theorem \ref{thm:Main}\eqref{(i)} we know  that the derivative of the integrated density of states is a continuous function and is proportional to the Chern marker $\Chern(\F_{b})$ for every $b$ restricted to compact intervals in $(b_1,b_2)$ where the spectral island $\sigma_b$ remains isolated from the rest of the spectrum. 

The main observation, proved for the convenience of the reader in Appendix \ref{app:ChernAppendix}, is that the  map 
$$(b_1,b_2)\ni b\mapsto \Chern(\F_{b})\in \R$$
takes {\bf integer values} when $b/(2\pi)\in \mathbb{Q}$. Since the map is at the same time uniformly continuous on any compact interval included in $(b_1,b_2)$, a straightforward argument shows that it must be constant and thus everywhere equal to an integer $c_1\in \mathbb{Z}$. 

In order to prove \eqref{hc1'}, let us fix some $b_0\in (b_1,b_2)$ such that $b_0/(2\pi) =p/q\in \mathbb{Q}$. Then for every other $b$ in this interval we have by \eqref{StredaFormula}
$$\Ids(\Pi_b)=\Ids(\Pi_{b_0})+c_1 \frac{b}{2\pi} -\frac{c_1p}{q}.$$
In Appendix \ref{app:ChernAppendix} we will prove that $\Pi_{b_0}$ is a fibered operator. In the magnetic Bloch-Floquet representation, the fiber of $\Pi_{b_0}$ at a fixed quasimomentum ${\bf k}$ is a rank-$M$ orthogonal projection. Also, $\sigma_{b_0}$ is the union of $M$ mini-bands (which might overlap). When we compute $\Ids(\Pi_{b_0})$ with the help of \eqref{hc15}, the result is $M/q\in \mathbb{Q}$.  Thus setting $c_0:=(M-c_1p)/q$ concludes the proof.

\subsection{Proof of Corollary \ref{crl:Main}} 
The continuity of the function $b \mapsto \Pi_b$ with respect to the strong topology is known since at least Kato \cite{Kato66}, who used asymptotic perturbation theory. For the sake of the reader we present here a much shorter proof based on magnetic perturbation theory. By a standard density argument it is enough to show that 
$$\lim_{\epsilon\to 0}\|(\Pi_{b+\epsilon}-\Pi_b)\psi\|=0$$
for every $\psi$ with compact support. This limit follows from \eqref{hc2}, \eqref{hc4}, \eqref{hc5}, from the inequality
$$\left |e^{\iu \epsilon \phi(\x,\x')}-1\right |\leq |\epsilon|\; |\phi(\x,\x')|\leq  |\epsilon|\; \|\x-\x'\|\; \|\x'\| /2$$
and the fact that the map $\x' \mapsto \|\x'\|\; |\psi(\x')|$ belongs to $L^2(\R^2)$. 

Now let us continue with proving the discontinuity of the function $b \mapsto \Pi_b$ in the norm topology. We start with a general fact: if $P_1$ and $P_2$ are orthogonal projections, then $\|P_1-P_2\|\leq 1$ \cite[Chap.~I, Problem 6.33]{Kato66}. Hence in order to prove \eqref{hc40} it is enough to show that the $\liminf$ cannot be less than one.

Let $c_1\neq 0$. Assume that \eqref{hc40} is false. Then there would exist an $a \in [0,1)$ and a sequence $\epsilon_n\neq 0$, depending on $a$, such that $\epsilon_n\to 0$ and $\lim_{n\to \infty}\|\F_{b+\epsilon_n}-\F_b\|=a$. This implies the existence of some $n_0$ such that for every $n\geq n_0$ we have
$$\|\F_{b+\epsilon_n}-\F_b\|\leq \frac{(1+a)}{2}<1.$$
Then $\F_{b+\epsilon_n}$ and $\F_b$ would be intertwined by a Kato-Nagy unitary that satisfies the hypothesis of Lemma \ref{MainLemma}, see Lemma \ref{LemmaKNUnitary}. Thus $\mathcal{I}(\F_{b+\epsilon_n})=\mathcal{I}(\F_{b})$ if $n$ is large enough, which contradicts that $c_1\neq 0$.

Now let $c_1=0$, and assume \eqref{Wannier-like}. Let us define the unit vector 
$$\Psi_{\epsilon,\eta} (\x):= e^{\iu \epsilon \phi(\x,\eta)} \psi_{1,\eta}(\x),\quad \eta \in \Z^2.$$
Using \eqref{hc4}, \eqref{hc5}, and the exponential decay \eqref{WannierExpLoc} of $w_1$ we obtain the existence of $C>0$ such that for all $\eta$
\[
\langle \Psi_{\epsilon,\eta}, \Pi_{b+\epsilon}\Psi_{\epsilon,\eta}\rangle \geq 1-C\; |\epsilon|.
\]
Also
\[
\|\Pi_{b+\epsilon}-\Pi_b\|\geq \langle \Psi_{\epsilon,\eta},(\Pi_{b+\epsilon}-\Pi_b)\Psi_{\epsilon,\eta}\rangle \geq 1-C\; |\epsilon| -\sum_{j=1}^{M} \sum_{\gamma\in \Z^2} \left |\langle \Psi_{\epsilon,\eta},\psi_{j,\gamma}\rangle \right |^2.
\]
Since the left-hand side is independent of $\eta$ we have the inequality 
\begin{align}\label{hc44}
\|\Pi_{b+\epsilon}-\Pi_b\|\geq 1-C\; |\epsilon| -\inf_{\eta\in \Z^2}\sum_{j=1}^{M} \sum_{\gamma\in \Z^2} \left |\langle \Psi_{\epsilon,\eta},\psi_{j,\gamma}\rangle \right |^2.
\end{align}
We will now show that
\[
\lim_{|\eta|\to\infty }\sum_{j=1}^{M} \sum_{\gamma\in \Z^2} \left |\langle \Psi_{\epsilon,\eta},\psi_{j,\gamma}\rangle \right |^2=0,
\]
which inserted in \eqref{hc44} would finish the proof. 
By changing $\gamma$ into $\eta +\gamma$ we will investigate 
$$\sum_{\gamma\in \Z^2} \left |\langle \Psi_{\epsilon,\eta},\psi_{j,\gamma+\eta}\rangle \right |^2.$$
Due to the exponential localization of the $w_j$'s and using the triangle inequality one can prove the existence of two constants $\alpha,C>0$ such that 
$$\left |\langle \Psi_{\epsilon,\eta},\psi_{j,\gamma+\eta}\rangle \right |\leq Ce^{-\alpha \|\gamma\|},\quad \forall \eta \in \Z^2.$$
Thus the proof would be over if we can prove that for fixed $\gamma$ we have
\[
\lim_{|\eta|\to\infty } \langle \Psi_{\epsilon,\eta},\psi_{j,\gamma+\eta}\rangle =0.
\]

Let us compute
\[
\begin{aligned} \langle \Psi_{\epsilon,\eta},\psi_{j,\gamma+\eta}\rangle&= \langle e^{\iu \epsilon \phi(\cdot,\eta)} e^{\iu \theta(\cdot)} \, \tau_{b',\eta} w_1, e^{\iu \theta(\cdot)} \, \tau_{b',\gamma+\eta} w_j \rangle \\
& =  \langle e^{\iu \epsilon \phi(\cdot,\eta)} \, \tau_{b',\eta} w_1, e^{\iu b' \phi(\eta,\gamma)} \, \tau_{b',\eta} \tau_{b',\gamma} w_j \rangle,
\end{aligned} \]
where we used the fact that magnetic translations form a projective representation of $\Z^2$. An easy computation, exploiting $\phi(\eta,\eta)=0$, shows that multiplication by the phase factor $e^{\iu \epsilon \phi(\cdot,\eta)}$ commutes with the magnetic translation $\tau_{b',\eta}$. Up to a factor of modulus one, the above scalar product is then proportional to the integral
$$\int_{\R^2}e^{-\iu \epsilon\phi(\x,\eta) }\overline{w_1(\x)} e^{\iu b\phi([\x],\gamma) }w_j(\x-\gamma)d\x$$
where $[\x] \in \Z^2$ denotes the ``integer part'' in the decomposition $\x = \xx + [\x]$ with $\xx \in \Omega$. The above integral is proportional to the Fourier transform of the $L^1$ function 
$$\overline{w_1(\x)} e^{\iu b\phi([\x],\gamma) }w_j(\x-\gamma),$$ evaluated at the point $\xi=\frac{\epsilon}{2} (-\eta_2,\eta_1)$. Since $\gamma$ is fixed and $\epsilon\neq 0$, the Riemann-Lebesgue lemma implies that the integral goes to zero when $|\eta|\to\infty$. The proof is over. 

\section{Proof of Theorem \ref{thm:StredaLimSupInf}} \label{sect3} 

The strategy of the proof resembles that of Theorem~\ref{thm:Main}. In this section we denote by $\F_{\lambda} \equiv \Pi_{b,\lambda}$ the Fermi projection on the isolated spectral island $\sigma_{b,\lambda}$ of $H_{b,\lambda}$. We start by showing the existence of an auxiliary projection $\Fe^{(\lambda)}$, unitarily equivalent to $\F_{\lambda}$, which can be used to explicitly compute the first order expansion in $\lambda$ of $I_\lambda$  in Theorem \ref{thm:StredaLimSupInf}. 

Let us introduce the phase factor given by
\begin{align}\label{hc9}
\phi_\lambda(\x,\x'):=\int_{\x}^{\x'}\ASl_\lambda \equiv \int_{0}^{1}  \, \ASl_\lambda(\x'+s(\x-\x')) \cdot (\x-\x') \,ds .
\end{align}
Note that when $\ASl_\lambda$ (see \eqref{hc20}) comes from a constant magnetic field, we obtain the usual Peierls phase \eqref{Peierlsphase}.

Using results from magnetic perturbation theory \cite{Nenciu02} (see also Appendix \ref{app:Kernel}) we have 
\begin{equation}\label{hc30}
|\Pi_{\lambda} (\x;\x') - e^{\iu \phi_\lambda (\x,\x')} \Pi_0(\x;\x')  | \leq C \lambda e^{-\beta \|\x-\x'\|} \, .
\end{equation}

As before, we define the operator $\widetilde{\F}_{\lambda}$ through its integral kernel:
$$
\widetilde{\F}_{\lambda}(\x;\x'):=e^{\iu \phi_\lambda (\x,\x')} \Pi_0(\x;\x'),
$$
where $\widetilde{\F}_{\lambda}$ is selfadjoint due to the antisymmetry of the phase factor defined in \eqref{hc9}. 
We also define the auxiliary projection $\Fe^{(\lambda)}$ (the analogue of $\mathcal{P}^{(\epsilon)}$ from the previous section) as
\begin{align}\label{ProjEps}
\Fe^{(\lambda)}:=\widetilde{\F}_{\lambda}+\big (\widetilde{\F}_{\lambda}-\tfrac{1}{2} {\bf 1}\big )\big \{ ({\bf 1}+4 \Delta^{(\lambda)})^{-1/2}-{\bf 1}\big \} \, , \quad \Delta^{(\lambda)} := \widetilde{\F}_{\lambda}^2-\widetilde{\F}_{\lambda} \, ,
\end{align}
such that
\begin{equation}
\label{GoodLoc2}
\left|\left(\F_{\lambda}-\Fe^{(\lambda)}\right)(\x;\x')\right| \leq C \lambda e^{-\alpha\|\x-\x'\|}\, .
\end{equation}
From this, one can prove  \cite{CorneanMonacoMoscolari} that if $\lambda$ is small enough, then one can construct the Kato-Nagy unitary $\mathsf{U}^{(\lambda)}$ such that $\F_{\lambda}=\mathsf{U}^{(\lambda)} \Fe^{(\lambda)} \mathsf{U}^{(\lambda)*}$ and moreover (see \cite[Lemma 8.5]{CorneanMonacoMoscolari})
\begin{equation}
\label{ULoc}
\left|(\mathsf{U}^{(\lambda)}- \Id) (\x;\x') \right| \leq C e^{-\alpha\|\x-\x'\|} \, . 
\end{equation}

Now we are ready to prove equation \eqref{StredaLimSup}. We only show the proof for the $\limsup$ case since the $\liminf$ case is completely analogous.

The operator $\chi_L\Fe^{(\lambda)}$ is trace class (cf.\ \eqref{hc3}) and we have the trivial identity
$$\frac{1}{|\Lambda_L|}  \Tr\left(\chi_L \F_{\lambda} \right)= \frac{1}{|\Lambda_L|}  \left(\Tr\left(\chi_L \F_{\lambda}\right) -  \Tr\left(\chi_L\Fe^{(\lambda)}\right)  \right) +  \frac{1}{|\Lambda_L|} \Tr\left(\chi_L\Fe^{(\lambda)}\right).$$

Thanks to Lemma \ref{MainLemma}, the first term on the right-hand side of the above identity converges to zero as $L \to \infty$, hence taking the $\limsup$ of both sides yields
\begin{equation}
\label{limsup}
I_\lambda=\limsup_{L \to \infty} \frac{1}{|\Lambda_L|} \Tr\left(\chi_L\Fe^{(\lambda)}\right)   \, .
\end{equation}
What we have to prove now is that
\begin{equation}
\label{EquivalentStredaLS}
\limsup_{L \to \infty} \frac{1}{|\Lambda_L|} \Tr\left(\chi_L\Fe^{(\lambda)}\right) = \Ids(\F_0) + \lambda \frac{\langle B \rangle}{2\pi} \Chern(\F_0) + \mathcal{O}(\lambda^2) \, .
\end{equation}

As we have done in the case of a constant magnetic field, we need to study the expansion in $\lambda$ of the trace on the left-hand side of \eqref{EquivalentStredaLS} using \eqref{ProjEps},  and control the behaviour at large $L$. We separately analyse each term of \eqref{ProjEps}.

Denote by $\fl(\x,\y,\x')$ the magnetic flux generated by the slowly varying magnetic perturbation through the triangle $\langle \x,\y,\,\x' \rangle$ with corners situated at $\x$, $\y$ and $\x'$:
\begin{equation}
\label{Magneticflux}
\fl(\x,\y,\x')=\phi_\lambda(\x,\y)+\phi_\lambda(\x,\y)-\phi_\lambda(\x,\x') = \int_{\langle \x,\y,\,\x' \rangle }  \; \lambda  B(\lambda \tx) \, d \tx.
\end{equation}
Since $B$ has uniformly bounded derivatives (see \eqref{BoundB}), we obtain
\begin{equation}
\label{BoundMagneticFlux}
|\fl(\x,\y,\x')| \leq \lambda  C_{B}  \|\x-\y\| \|\y-\x'\|
\end{equation}
with $ C_{B}$ a positive constant that only depends on the magnetic field $B$. Using equations \eqref{GoodLoc2}, \eqref{Magneticflux}, and the fact that $\Fe_\lambda$ is a projection we obtain
\begin{align*}
\Delta^{(\lambda)}(\x;\x')&=\int_{\R^2}  e^{\iu \phi_\lambda(\x,\y)} \F_{\lambda}(\x;\y) e^{\iu  \phi_\lambda(\y,\x')} \F_{\lambda}(\y;\x')d \y - e^{\iu  \phi_\lambda(\x,\x')}  \F_{\lambda}(\x;\x') \\
&= e^{\iu  \phi_\lambda(\x,\x')} \int_{\R^2}  e^{\iu \fl(\x,\y,\x') } \F_{\lambda}(\x;\y) \F_{\lambda}(\y;\x')d \y - e^{\iu  \phi_\lambda(\x,\x')}  \F_{\lambda}(\x;\x')  \\
&=  e^{\iu  \phi_\lambda(\x,\x')} \iu \int_{\R^2}  \fl(\x,\y,\x')  \F_{\lambda}(\x;\y) \F_{\lambda}(\y;\x')d \y + \mathcal{O}(\lambda^2 e^{-\alpha \|\x-\x'\|}) \, .
\end{align*}
Given two vectors $\x$ and $\y$ we denote by $\{\x\wedge \y\}:=x_1y_2-x_2y_1$. From \eqref{Magneticflux} we have 
\begin{equation}
\label{AuxFlux}
\fl(\x,\y,\x')   
=  \frac{\lambda}{2} B(\lambda \x') \{(\x-\y)\wedge(\y-\x')\} + \int_{\langle \x,\y,\,\x' \rangle} \,  \, \lambda \, \big( B(\lambda \tx) - B(\lambda \x')\big ) d \tx   \, . 
\end{equation}
From \eqref{BoundB} we deduce that $B$ is a Lipschitz function:
\begin{equation}
\label{LipB}
|B(\x)-B(\x')| \leq K_{B} \|\x-\x'\|,\quad  \forall \x, \x' \in \R^2.
\end{equation}
Using the above estimate, the fact that the diameter of a triangle is less than the sum of the lengths of any two of its sides, and knowing that the area of the triangle is less than the product of the same two side-lengths, we get
\begin{equation}
\label{AuxFlux2}
\left|\int_{\langle \x,\y,\,\x' \rangle}   \, \lambda \, \left( B(\lambda \tx) - B(\lambda \x')   \right) \, d \tx\right| \leq \lambda^2 K_{B} \left(\|\x-\y\|^2\|\y-\x'\|+\|\x-\y\|\|\y-\x'\|^2\right)  \,.
\end{equation}
Therefore, exploiting \eqref{AuxFlux}, \eqref{AuxFlux2} and the exponential localization of the integral kernel of $\F_{\lambda}$, we obtain
\begin{equation*}
\label{eqAux}
\Delta^{(\lambda)}(\x;\x')= \iu \lambda  \int_{\R^2} d \y \,  B (\lambda \x') \frac{1}{2}\{(\x-\y)\wedge(\y-\x') \} \F_{\lambda}(\x;\y) \F_{\lambda}(\y;\x') + \mathcal{O}(\lambda^2e^{-\alpha \|\x-\x'\|})\, .
\end{equation*}
Putting $\x=\x'$ in the above equation we see that $\Delta^{(\lambda)}(\x;\x)=\mathcal{O}(\lambda^2)$ thus $\Delta^{(\lambda)}$ gives no contributions of order zero or $\lambda$ to  $|\Lambda_L|^{-1}\Tr\left(\chi_L\Fe^{(\lambda)}\right)$, uniformly in $L\geq 1$ (cf.\ the argument below \eqref{epsExpansion}).

For the next term in the expansion \eqref{ProjEps} we have
\begin{align*}
&-2\left(\Fe^{(\lambda)}\Delta^{(\lambda)}\right)(\x;\x) \\
&= \iu \lambda B(\lambda \x )\int_{\R^2} d\tx\, \F_{0}(\x;\tx) \cdot \\
&\phantom{= \iu \lambda} \cdot \int_{\R^2} d \y \left[ (\tx-\y)_1 (\y-\x)_2-(\tx-\y)_2 (\y-\x)_1  \right]  \F_{0}(\tx;\y) \F_{0}(\y;\x)+\mathcal{O}(\lambda^2)\\
&=\lambda B(\lambda\x) \left(\iu\F_{0} \left[ \left[X_1,\F_{0}\right] , \left[X_2,\F_{0}\right] \right]\right)(\x;\x) + \mathcal{O}(\lambda^2) \, .
\end{align*}
Thus we need to understand the behaviour of 
$$
\limsup_{L \to \infty} \frac{1}{|\Lambda_L|} \int_{ \Lambda_L}  B(\lambda\x) \left(\iu\F_{0} \left[ \left[X_1,\F_{0}\right] , \left[X_2,\F_{0}\right] \right]\right)(\x;\x) d\x\, .
$$ 
Because the integrand is uniformly bounded, it is enough to consider integer values for $L$, and in order to simplify the notation we assume that $L=2L'+1$ with $L'\in \mathbb{N}$. In this case we have 
$$\Lambda_L=\{\x=\underline{x}+\gamma,\quad \underline{x}\in \Omega,\; |\gamma_j|\leq L'\}.$$
Let us denote by $\mathfrak{C}(\x):=\left(\iu\F_{0} \left[ \left[X_1,\F_{0}\right] , \left[X_2,\F_{0}\right] \right]\right)(\x;\x)$. We have that $\mathfrak{C}(\xx+\gamma)=\mathfrak{C}(\xx)$ for every $\xx \in \Omega$ and $\gamma \in \Z^2$. Moreover
\begin{align*}
\int_{ \Lambda_L}  B(\lambda\x) & \mathfrak{C}(\x)\ d\x=
\sum_{|\gamma_j|\leq  L'} \int_{\Omega}  \, B(\lambda\xx+\lambda\gamma) \mathfrak{C}(\xx)d\xx\\
&=  \sum_{|\gamma_j|\leq L'} \int_{\Omega}  \, B(\lambda\gamma) \mathfrak{C}(\xx)d\xx+ \sum_{|\gamma_j|\leq L'} \int_{\Omega}  \, \left(B(\lambda\xx+\lambda\gamma) - B(\lambda\gamma)\right) \mathfrak{C}(\xx) d\xx\\
&= \frac{\Chern(\F_{0}) }{2\pi} \sum_{|\gamma_j|\leq L'} B(\lambda\gamma) + \lambda \, \mathcal{O}(L^2) 
\end{align*}
by \eqref{LipB}.  Therefore, in view of \eqref{PeridicLimitFlux}, in order to complete the proof it suffices to show that 
\begin{equation} \label{dm1}
\limsup_{L \to \infty} \left  |\frac{1}{|\Lambda_{L}|} \sum_{|\gamma_j|\leq L'} B(\lambda\gamma) -  \frac{1}{|\Lambda_{\lambda L}|} \int_{\Lambda_{\lambda L}}  \, B(\x) d \x \right |= \mathcal{O}(\lambda)\, .
\end{equation}
This is a consequence of the formula $|\Lambda_L|=\lambda ^{-2} |\Lambda_{\lambda L}|$ and of \eqref{LipB}. Indeed a computation similar to the above yields
$$ \lambda^2 B(\lambda \gamma)=\int_{|x_j-\lambda \gamma_j|\leq \lambda/2}B(\x)d\x +\mathcal{O}(\lambda^3),$$
which gives \eqref{dm1} upon summing over $\gamma \in \Lambda_L \cap \Z^2$. In turn, using \eqref{PeridicLimitFlux}, the estimate \eqref{dm1} can be rewritten as 
$$
\limsup_{L \to \infty} \frac{1}{|\Lambda_{L}|} \sum_{|\gamma_j|\leq L'} B(\lambda\gamma) =  \langle B \rangle+ \mathcal{O}(\lambda)\, .
$$
This ends the proof of Theorem \ref{thm:StredaLimSupInf}.


\appendix\

\section{Bloch-Floquet(-Zak) transform and the Chern number}
\label{app:ChernAppendix}
In this appendix we discuss the magnetic Bloch-Floquet transform and the Chern marker. The discussion is adapted to the special class of integral operators we work with. 

Let  $b_0=2\pi p/q$ for some $p,q$ co-prime integer numbers and define the (modified) magnetic translation of vector $\eta \in \Z^2$, $\widehat\tau_{b_0,\eta}$, to be the following unitary operator:
\begin{equation}
\label{ModifiedMagneticTranlsation}
(\widehat\tau_{b_0,\eta}f)(\x):= e^{\iu b_0 \eta_1\eta_2/2} e^{\iu b_0\phi(\x,\eta)} f(\x-\eta) \, ,  \qquad \forall f \in  L^2(\R^2)\, .
\end{equation} 
By direct computation one can prove that the set $\{\widehat\tau_{b_0,\gamma}\}_{\gamma \in \Z^2}$ forms a unitary projective representation of the group $\Z^2$, that is 
$$
\widehat\tau_{b_0,\gamma} \widehat\tau_{b_0,\xi} = e^{-\iu b_0 \gamma_2 \xi_1} \widehat\tau_{b_0,\gamma+\xi}\, , \qquad \forall\, \gamma,\xi \in \Z^2 \,.
$$ 
Considering the enlarged lattice 
$$
\Z^2_{(q)}:= \left\{ \eta \in \Z^2 \; | \;\eta=(\gamma_1,q\gamma_2)\; ,\gamma \in \Z^2 \right\} \, , 
$$
we have that $\{\widehat\tau_{b_0,\eta}\}_{\eta \in \Z^2_{(q)}}$ is a true unitary representation of $\Z^2_{(q)}$, that is
$$
\widehat\tau_{b_0,\eta} \widehat\tau_{b_0,\rho} =  \widehat\tau_{b_0,\eta+\rho} \, , \qquad \forall \, \eta,\rho \in \Z^2_{(q)} \, .
$$
Let us denote by $\Z^{2*}_{(q)}$ the dual lattice of $\Z^2_{(q)}$ and by $\mathbb{B}_{(q)}$ and $\Omega_{(q)}$ the unit cells of  $\Z^{2*}_{(q)}$ and $\Z^2_{(q)}$ respectively, i.e.
$$
\mathbb{B}_{(q)}:=\left(-\pi,\pi\right] \times \left(-\pi/q ,\pi/q \right], \quad \Omega_{(q)}:=(-1/2,1/2]\times(-q/2, q/2] 
.$$ 
$\mathbb{B}_{(q)}$ is usually called the (magnetic) Brillouin zone. 
We introduce the Bloch-Floquet unitary (denoted by $\BF$) as the operator which maps $L^2(\R^2)$ onto $ \int_{\mathbb{B}_{(q)}}^\oplus L^2(\Omega_{(q)})d\kk$ and acts on $f \in C^\infty_{0}(\R^2)$ as
\[
(\BF f)(\kk,\yy):= \frac{1}{|\mathbb{B}_{(q)}|^{1/2}} \sum_{\gamma \in \Z^2_{(q)}} e^{-\iu \kk \cdot \gamma} \, (\widehat\tau_{b_0,-\gamma}f)(\yy) \, ,\quad \kk \in \mathbb{B}_{(q)}\, ,\quad \yy \in \Omega_{(q)} \, .
\]
Its adjoint acts in the following way: 
\begin{align}\label{hc41}
(\BF^*\psi)(\yy+\eta)=\frac{1}{|\mathbb{B}_{(q)}|^{1/2}} \int_{\mathbb{B}_{(q)}}e^{\iu \kk \cdot \eta} e^{-\iu b_0 \eta_1\eta_2/2} e^{\iu b_0\phi(\yy,\eta)}\psi(\kk,\yy)d\kk.
\end{align}
Assume that $T$ is a bounded operator on $L^2(\R^2)$ with a jointly continuous integral kernel $T(\x;\x')$ for which there exists $\alpha,C>0$ such that 
$$|T(\x;\x')|\leq C\; e^{-\alpha\| \x-\x'\|},\quad \forall \x,\x'\in \R^2.$$
We also assume that $T$ commutes with 
the magnetic translations \eqref{ModifiedMagneticTranlsation} which leads to (see also \eqref{IntertwMagTrs})
\begin{equation*}
T(\x;\x')=e^{\iu b_0 \phi(\x,\eta)} T(\x-\eta;\x'-\eta) e^{-\iu b_0 \phi(\x',\eta)} \,, \quad \forall \, \eta \in \Z^2_{(q)} \, ,
\end{equation*}
or, by replacing $\x'$ with $\yy+\eta$ and $\x$ by $\xx+\gamma$,
\[
T(\xx+\gamma;\yy+\eta)=e^{\iu b_0 \phi(\gamma,\eta)}e^{\iu b_0 \phi(\xx,\eta)} T(\xx+\gamma-\eta;\yy) e^{-\iu b_0 \phi(\yy,\eta)} \,,\quad \forall \xx,\yy\in \Omega_{(q)}.
\]
Then a straightforward computation shows that $\BF T\BF^*$ is a fibered operator $\int_{\mathbb{B}_{(q)}}^\oplus t_\kk d\kk$ where $t_\kk$ is bounded on $L^2(\Omega_{(q)})$ and has the jointly continuous integral kernel
\begin{equation}\label{hc12}
t_\kk(\xx;\yy):= \sum_{\eta \in \Z^2_{(q)}}e^{-\iu \kk\cdot \eta}e^{-\iu b_0 \eta_1\eta_2/2} e^{-\iu b_0\phi(\xx,\eta)} T(\xx+\eta; \yy),\quad \forall x,y\in \Omega_{(q)}.
\end{equation}
We observe that the above kernel is $\Z^{2*}_{(q)}$-periodic in $\kk$ and its Fourier coefficients give us back the original kernel:
\[
T(\xx+\eta; \yy)=\frac{1}{|\mathbb{B}_{(q)}|} \int_{\mathbb{B}_{(q)}} t_\kk(\xx;\yy) e^{\iu \kk\cdot \eta}e^{\iu b_0 \eta_1\eta_2/2} e^{\iu b_0\phi(\xx,\eta)}d\kk,\quad \forall \xx,\yy\in \Omega_{(q)}.
\]
In particular: 
\begin{equation}\label{hc14}
\frac{1}{|\Omega_{(q)}|}\int_{\Omega_{(q)}} T(\xx;\xx)d\xx=\frac{1}{4\pi^2} \int_{\mathbb{B}_{(q)}} \left (\int_{\Omega_{(q)}} t_\kk(\xx;\xx)d\xx\right ) d\kk.
\end{equation}
Most importantly, if $T$ is an orthogonal projection like $\Pi_b$ in \eqref{IntertwMagTrs} with $b=b_0$, then its corresponding fiber denoted by $p_\kk$ is also an orthogonal projection, real analytic and periodic in $\kk$, with finite (and constant) rank, and \eqref{hc14} can be restated as:
\begin{equation}\label{hc15}
\lim_{L\to\infty}\frac{1}{|\Lambda_L|}{\rm Tr}_{L^2(\R^2)}(\chi_L \Pi_{b_0})=\frac{1}{4\pi^2} \int_{\mathbb{B}_{(q)}} {\rm Tr}_{L^2(\Omega_{(q)})}(p_\kk) d\kk =\frac{{\rm rank}(p)}{q}\in \mathbb{Q}.
\end{equation}

Next we study the operator 
$$T=\iu\; \Pi_{b_0} [[X_1,\Pi_{b_0}],[X_2,\Pi_{b_0}]]$$
which appears in \eqref{GenChern}. 
The commutators $[X_j,\Pi_{b_0}]$ have kernels  given by $(x_j-x_j')\Pi_{b_0}(\x;\x')$ and thus they are exponentially localized around the diagonal and commute with the magnetic translations. Let us find the fiber of $[X_j,\Pi_{b_0}]$.

Denote by $U^Z$ the fibered unitary operator acting on $\int_{\mathbb{B}_{(q)}}^\oplus L^2(\Omega_{(q)})d\kk$ given by the fiber 
$$(u_\kk^Z \psi)(\xx):= e^{-\iu \kk \cdot \xx} \psi(\xx),\quad \forall \psi\in L^2(\Omega_{(q)}).$$
The Zak modification of the Bloch-Floquet unitary is $\BFZ:= U^Z\BF$, and it will be called the Bloch-Floquet-Zak (BFZ) transform.
The integral kernel of the BFZ transform applied to $\Pi_{b_0}$ can be read off from \eqref{hc12}:
\[
p_\kk^Z(\xx;\yy):= \sum_{\eta \in \Z^2_{(q)}}e^{-\iu \kk\cdot (\xx+\eta-\yy)}e^{-\iu b_0 \eta_1\eta_2/2} e^{-\iu b_0\phi(\xx,\eta)} \Pi_{b_0}(\xx+\eta; \yy),\quad \forall \xx,\yy\in \Omega_{(q)}.
\]
Differentiating with respect to $k_j$ and conjugating back with $(u_\kk^Z)^*$ we obtain that the fiber of $[X_j,\Pi_{b_0}]$ in the Bloch-Floquet representation is
$$(u_\kk^Z)^*\; \iu \big (\partial_{k_j}p_\kk^Z \big ) \; u_\kk^Z.$$
Thus the Bloch-Floquet fiber of $T$ becomes
$$t_\kk =-\iu\,  p_\kk (u_\kk^Z)^*[\partial_{k_1}p_\kk^Z,\partial_{k_2}p_\kk^Z]u_\kk^Z.$$
Introducing this into \eqref{hc14} and using trace cyclicity we obtain 

\[
\int_{\Omega} T(\x;\x)d\x=\frac{1}{|\Omega_{(q)}|}\int_{\Omega_{(q)}} T(\xx;\xx)d\xx=-\frac{\iu }{4\pi^2} \int_{\mathbb{B}_{(q)}} {\rm Tr}_{L^2(\Omega_{(q)})}\big ( p_\kk^Z \;[\partial_{k_1}p_\kk^Z,\partial_{k_2}p_\kk^Z]\big )d\kk.
\]
Thus 
$$ 2\pi \lim_{L\to\infty}\frac{1}{|\Lambda_L|}{\rm Tr}_{L^2(\R^2)}(\chi_L T)=\frac{1}{2\pi \iu} \int_{\mathbb{B}_{(q)}} {\rm Tr}_{L^2(\Omega_{(q)})}\big ( p_\kk^Z \;[\partial_{k_1}p_\kk^Z,\partial_{k_2}p_\kk^Z]\big )d\kk=:\Chern(p^Z).$$
After an elementary but long computation one may show that 
$$
\Tr_{L^2(\Omega_{(q)})}\big ( p_\kk^Z\;d p_\kk^Z \wedge d p_\kk^Z\big ) - \Tr_{L^2(\Omega_{(q)})}\big ( p_\kk\; d p_\kk \wedge d p_\kk\big ) = d\left \{ \Tr_{L^2(\Omega_{(q)})}(p_\kk\;  (u_\kk^Z)^* \wedge d u_\kk^Z)\right \} \,.
$$ 
The right-hand side is periodic in $\kk$, therefore after integration on the Brillouin zone and an application of  Stokes' Theorem we obtain that $\Chern(p^Z)=\Chern(p)$. The latter is well-known to be an integer from the theory of vector bundles: for a direct proof (showing that it equals the winding number of the determinant of a certain smooth and $2\pi$-periodic unitary matrix) see \cite[Proposition 5.3]{CorneanMonacoMoscolari}.  More about the number $\Chern(p^Z)$ can be found e.g.\ in  \cite{Panati07}.

In particular, when $\Chern(p)=0$ we may find \cite{CorneanMonacoMoscolari} an orthonormal basis $\{\xi_{j}(\kk,y)\}_{j=1}^{{\rm rank}(p)}$ in the range of $p_\kk$ which consists of real analytic vectors in $\kk$ and which are also periodic. Applying the inverse Bloch-Floquet transform as in \eqref{hc41} we obtain exponentially localized Wannier vectors
$$w_j(\yy+\eta):=\frac{1}{|\mathbb{B}_{(q)}|^{1/2}} \int_{\mathbb{B}_{(q)}}e^{\iu \kk \cdot \eta} e^{-\iu b_0 \eta_1\eta_2/2} e^{\iu b_0\phi(y,\eta)}\xi_j(\kk,\yy)d\kk,\quad  \yy\in \Omega_{(q)},\quad \eta\in \Z^2_{(q)},$$
such that
$$\Pi_{b_0}(\x;\x')=\sum_{j=1}^{{\rm rank}(p)} \sum_{\gamma\in \Z^2_{(q)}} (\widehat{\tau}_{b_0,\gamma} w_j)(\x) \overline{(\widehat{\tau}_{b_0,\gamma} w_j)(\x')} =\sum_{j=1}^{{\rm rank}(p)} \sum_{\gamma\in \Z^2_{(q)}} (\tau_{b_0,\gamma} w_j)(\x) \overline{(\tau_{b_0,\gamma} w_j)(\x')} .$$
Notice that the above is exactly in the form \eqref{Wannier-like} in the statement of Corollary \ref{crl:Main}.

\section{Kernel regularity, exponential localization and gauge covariant magnetic perturbation theory}
\label{app:Kernel}
In this appendix we sketch the main ideas behind the estimates \eqref{hc5} and  \eqref{hc30} and collect all the regularity results on integral kernels that we have used in the proofs, directly or indirectly. We only focus on \eqref{hc30} because \eqref{hc5} is nothing but  \eqref{hc30} when the magnetic field perturbation vanishes.

Assume that the total magnetic field is given by $b+B_\lambda(\x)$ where 
\[
\sup_{\x\in\R^2}|\partial^\alpha B_\lambda(\x)|\leq \lambda^{|\alpha|+1} C_\alpha,\quad \alpha\in \mathbb{N}^2,\quad |\alpha|\leq 1. 
\]
Define the family of vector potentials depending on the parameter $\y\in\R^2$:
$$\mathcal{A}_\lambda (\x,\y):= \left( \int_0^1 s B_\lambda(\y+s(\x-\y))ds \right) \; (-x_2+y_2,x_1-y_1).$$
We have the estimates 
\begin{align}\label{hc21'}
|\partial_\x^\alpha \mathcal{A}_\lambda (\x,\y)|\leq \lambda^{|\alpha|+1} C_\alpha \;\|\x-\y\|,\quad \alpha\in \mathbb{N}^2,\quad |\alpha|\leq 1. 
\end{align}
It turns out that they all generate the same magnetic field $B_\lambda(\x)$. Denote by $\mathcal{A}_\lambda (\x):=\mathcal{A}_\lambda (\x,{\bf 0})$, as in \eqref{tr-g}. Then we must have that $\mathcal{A}_\lambda (\x)$ and $\mathcal{A}_\lambda (\x,\y)$ differ by a gradient, and one can show that 
$$\mathcal{A}_\lambda (\x)-\mathcal{A}_\lambda (\x,\y)=\nabla_\x \phi_\lambda (\x,\y)$$
where $\phi_\lambda (\x,\y)$ is nothing but the magnetic phase defined in \eqref{hc9}. 

An identity which plays a fundamental role in the gauge covariant magnetic perturbation theory is
\begin{align}\label{hc22}
({\bf P}_\x -\mathcal{A}_\lambda (\x))e^{\iu \phi_\lambda(\x,\y)}=e^{\iu \phi_\lambda(\x,\y)}({\bf P}_\x -\mathcal{A}_\lambda (\x,\y)), \quad {\bf P}_\x := - \iu \nabla_\x \,.
\end{align}
For the constant magnetic field $b$ we introduce the linear magnetic potential $b\mathbf{A}(\x)=\frac{b}{2}(-x_2,x_1)$ with magnetic phase $b\,\phi(\x,\x')$ (see \eqref{Peierlsphase}), and we have the identity
\begin{align}\label{hc23}
({\bf P}_\x -b\mathbf{A}(\x))e^{\iu b\phi(\x,\y)}=e^{\iu b\phi(\x,\y)}({\bf P}_\x -b\mathbf{A}(\x-\y)).
\end{align}

Let us recall a general result about  the resolvent of any magnetic Schr\"odinger operator $H=\frac{1}{2}({\bf P}-\mathbf{a})^2+V$ with a bounded magnetic field (the magnetic potential may grow) and a bounded electric potential, not necessarily periodic.  Let $K\subset \rho(H)$ be a compact subset of the resolvent set of $H$. Then there exist two constants $\alpha,C>0$ such that for every $z\in K$ the resolvent $(H-z {\bf 1})^{-1}$ has an integral kernel $(H-z {\bf 1})^{-1}(\x;\x')$ which is continuous outside the diagonal $\x=\x'$ and moreover \cite{Simon,BroderixHundLes00}
\begin{equation}
\label{IntKernelResolvent}
\sup_{z \in K}\left| (H-z {\bf 1})^{-1}(\x;\x') \right| \leq C \ln\left (2+\|\x-\x'\|^{-1}\right )e^{-\alpha\|\x-\x'\|} \, , \qquad \forall \x\neq \x' \in \R^2\,.
\end{equation}
This shows that the resolvent's kernel behaves like the one of the free Laplace operator in two dimensions. The constants $\alpha$ and $C$ can be chosen to be independent of the magnitude of the magnetic field due to the diamagnetic inequality. The exponential decay is a consequence of Combes-Thomas estimates \cite{CT73, CorneanNenciu09}. 

In the case of a purely magnetic Landau operator $H\sub{Landau}:=\frac{1}{2}({\bf P} - b\mathbf{A})^2$ its resolvent admits an explicit kernel of the type 
$$(H\sub{Landau}+{\bf 1})^{-1}(\x;\x')=e^{\iu b \phi(\x,\x')}F(\|\x-\x'\|)$$
where $F$ decays exponentially at infinity (it is in fact a Gaussian if $b\neq 0$) and has a local logarithmic singularity, see \cite{CorneanNenciu98}. Also, using \eqref{hc23} one can show that there exist $\alpha,C>0$ such that  
\begin{equation}
\label{hc24}
\left| ({\bf P}_\x -b\mathbf{A}(\x)) \, (H\sub{Landau}+{\bf 1})^{-1}(\x;\x') \right| \leq C  \|\x-\x'\|^{-1}e^{-\alpha\|\x-\x'\|} \, , \qquad \forall \x\neq \x' \in \R^2\,.
\end{equation}

We are interested in the integral kernel of the resolvents of 
$$H_\lambda = \frac{1}{2} ({\bf P} - b\mathbf{A}-\mathcal{A}_\lambda)^2+V,\quad H_0=\frac{1}{2}({\bf P} - b\mathbf{A})^2+V.$$
Without loss of generality we may assume that the spectrum of $H_0$ is non-negative. The second resolvent identity
$$(H_0+{\bf 1})^{-1}=(H\sub{Landau}+{\bf 1})^{-1}-(H\sub{Landau}+{\bf 1})^{-1}V(H_0+{\bf 1})^{-1},$$
together with \eqref{IntKernelResolvent}, \eqref{hc24} and the fact that $V$ is bounded, lead to the existence of $\alpha,C>0$ such that
\begin{equation}
\label{hc25}
\left| ({\bf P}_\x -b\mathbf{A}(\x))(H_0+{\bf 1})^{-1}(\x;\x') \right| \leq C  \|\x-\x'\|^{-1}e^{-\alpha\|\x-\x'\|} \, , \qquad \forall \x\neq \x' \in \R^2\,.
\end{equation}
Now if $K$ is some compact set in $\rho(H_0)$ and $z\in K$, then from the first resolvent identity
$$(H_0-z {\bf 1})^{-1}=(H_0+{\bf 1})^{-1}+(z+1)(H_0+{\bf 1})^{-1}(H_0-z {\bf 1})^{-1}$$
together with \eqref{IntKernelResolvent} and \eqref{hc25} we conclude that there exist $\alpha,C>0$ such that 
\begin{equation}
\label{hc26}
\sup_{z\in K}\left| ({\bf P}_\x -b\mathbf{A}(\x))(H_0-z {\bf 1})^{-1}(\x;\x') \right| \leq C  \|\x-\x'\|^{-1}e^{-\alpha\|\x-\x'\|} \, , \qquad \forall \x\neq \x' \in \R^2\,.
\end{equation}

We are now ready to deal with the magnetic perturbation induced by $\mathcal{A}_\lambda$. If $z\in \rho(H_0)$ we define the operator $S_\lambda(z)$ given by the integral kernel
\begin{equation} \label{eqn:S(z)}
S_\lambda(z)(\x;\x'):=e^{\iu \phi_\lambda (\x,\x')}(H_0-z {\bf 1})^{-1}(\x;\x').
\end{equation}
From \eqref{IntKernelResolvent} we see that $|S_\lambda(z)(\x;\x')|$ is pointwise bounded by a function of $\x-\x'$ which is in $L^1(\R^2)$, thus via Schur's criterion $S_\lambda(z)$ defines a bounded operator. The main observation is that the range of $S_\lambda(z)$ lies in the domain of $H_\lambda -z {\bf 1}$ and using \eqref{hc22} we have 
\begin{equation} \label{eqn:T(z)}
(H_\lambda -z {\bf 1})S_\lambda(z)=:{\Id} + T_\lambda(z)
\end{equation}
where $T_\lambda(z)$ has an integral kernel given by
\begin{multline*}
T_\lambda(z)(\x;\x'):=-2e^{\iu \phi_\lambda (\x,\x')}\mathcal{A}_\lambda(\x,\x')\cdot ({\bf P}_\x -b\mathbf{A}(\x))(H_0-z {\bf 1})^{-1}(\x;\x')\\
+e^{\iu \phi_\lambda (\x,\x')}\left \{|\mathcal{A}_\lambda(\x,\x')|^2-\iu \; {\rm div}_\x \mathcal{A}_\lambda(\x,\x')\right \}(H_0-z {\bf 1})^{-1}(\x;\x').
\end{multline*}
From this formula we see, by using \eqref{hc21'}, \eqref{IntKernelResolvent} and \eqref{hc26}, that $|T_\lambda(z)(\x;\x')|$ is also bounded by an $L^1(\R^2)$-function of $\x-\x'$, namely
\begin{equation} \label{dm2}
|T_\lambda(z)(\x;\x')|\leq C \lambda e^{-\alpha \|\x-\x'\|}.
\end{equation}
The factor $\lambda$ ensures that $\|T_\lambda(z)\|\leq C\lambda < 1$ uniformly in $z\in K$ if $\lambda$ is small enough. Hence we have that $H_\lambda -z {\bf 1}$ is invertible and 
$$ (H_\lambda-z {\bf 1})^{-1}=S_\lambda(z)\{{\Id} +T_\lambda(z)\}^{-1}.$$
Multiplying both sides of \eqref{eqn:T(z)} by $(H_\lambda-z {\bf 1})^{-1}$ then yields a resolvent-like identity:
\begin{align}\label{hc28}
(H_\lambda-z {\bf 1})^{-1}=S_\lambda(z)-(H_\lambda-z {\bf 1})^{-1}T_\lambda(z).
\end{align}

We have just proved that the gaps in the spectrum of $H_0$ are stable. Thus if $\sigma_0$ is an isolated spectral island of $H_0$  and $\mathcal{C}\subset \rho(H_0)$ is a positively oriented simple contour which encircles $\sigma_0$, then $\mathcal{C}$ also belongs to $\rho(H_\lambda)$ if $\lambda$ is small enough and we can define two Riesz projections as
$$\Pi_{\lambda}= \frac{\iu}{2\pi} \oint_{\mathcal{C}} \, \left(H_\lambda-z {\bf 1}\right)^{-1} dz, \quad \Pi_{0}= \frac{\iu}{2\pi} \oint_{\mathcal{C}} \, \left(H_0-z {\bf 1}\right)^{-1} dz.
$$
Using \eqref{IntKernelResolvent} and the identities $\Pi_\lambda=\Pi_\lambda^2$ and $\Pi_0=\Pi_0^2$ one can show that the integral kernels of both projections are no longer singular and, at the same time, they have an exponential localization near the diagonal. Moreover, by applying the Riesz integral to \eqref{hc28}, using the explicit expression \eqref{eqn:S(z)} for $S_\lambda(z)(\x;\x')$, and noting that 
\[ \sup_{z \in \mathcal{C}} | \{(H_\lambda - z {\bf 1})^{-1} T_\lambda(z)\}(\x;\x') | \le C \, \lambda \, e^{-\alpha \|\x-\x'\|} \]
from \eqref{dm2}, we finish the proof of \eqref{hc30}. 

\section{Kernel localization of the Kato-Nagy unitary} \label{app:KatoNagy}
In this appendix we explicitly prove that the integral kernel of the Kato-Nagy unitary appearing in the proof of Corollary \ref{crl:Main} satisfies the localization estimate \eqref{RequirementForU}.  The strategy of the proof, based on results presented in \cite{CorneanNenciu09}, consists in proving that the building blocks of the Kato-Nagy unitary are the product of two operators which have an exponentially localized integral kernel as in \eqref{hc2}.

\begin{lemma}
	\label{LemmaKNUnitary}
	Consider two projections $\F_{b+\epsilon}$ and $\F_b$, which are  spectral projections of the Hamiltonians $H_{b+\epsilon}$ and $H_{b}$ respectively. Assume that  $\F_{b+\epsilon}$ and $\F_b$ have jointly continuous integral kernels satisfying \eqref{hc2} and such that 
	$$
	\left\|\F_{b+\epsilon}- \F_{b} \right\| = c < 1 \; .
	$$
	Then, there exists an unitary operator $U$ such that $\F_{b}=U\F_{b+\epsilon}U^*$ and the integral kernel of $U$ satisfies \eqref{RequirementForU}.
\end{lemma}
\begin{proof}
	Define the operator $D:= \F_{b+\epsilon}-\F_b$. Since the kernel of both operators is exponentially localized, see \eqref{hc2}, one can find two constants $\alpha$ and $C$ that do not depend on $\epsilon$ such that
	\begin{equation}
	\label{AuxC3}
	| D(\x;\x') | \leq C e^{-\alpha\|\x-\x'\|} \, .
	\end{equation}
	Then, for all $0\leq \delta < \delta_0 < \alpha$, it holds
	\begin{equation}
	\label{AuxC}
	\left|D(\x;\x') e^{\delta \|\x -\x'\|} - D(\x;\x')\right|\leq C \delta e^{-\frac{\alpha}{2} \|\x-\x'\|} \, ,
	\end{equation}
	which is a simple consequence of $|e^{\|\x\|} -1 | \leq \|\x\| e^{\|\x\|}$. Hence, choosing $\delta$ small enough, using the triangle inequality and \eqref{AuxC} together with a Schur-Holmgren estimate, we obtain
	\begin{equation}
	\label{AuxC2}
	\sup_{ \x_0 \in \R^2} \left\| e^{\pm\delta \|\cdot -\x_0\| } D e^{\mp\delta \|\cdot -\x_0\| }\right\| < 1 \, .
	\end{equation}
	Since $c<1$, the unitary $U$ is given by the well known Kato-Nagy unitary, see \cite{Kato66}. From the explicit formula of $U$, we have that
	$$
	U-{\bf 1} = \left(\left({\bf 1}- D^2\right)^{-\frac{1}{2}} - {\bf 1} \right) \left( {\bf 1} + 2 \F_{b+\epsilon} \F_{b} - \F_{b} - \F_{b+\epsilon}\right) + 2 \F_{b+\epsilon} \F_{b} - \F_{b} - \F_{b+\epsilon}\, .
	$$
	The projections $\F_{b+\epsilon}$ and $\F_{b}$ have an exponentially localized integral kernel, see \eqref{hc2}, hence we only have to prove that the operator $\left(\left({\bf 1}- D^2\right)^{-\frac{1}{2}} - {\bf 1} \right) $ has an integral kernel that is exponentially localized, namely that satisfies an estimate analogous to \eqref{hc2}.
	Therefore, consider
	$$
	\begin{aligned}
	\left(\left({\bf 1}- D^2\right)^{-\frac{1}{2}} - {\bf 1} \right) &= D \left(\sum^{\infty}_{k=0} \frac{(2k+1)!!}{(k+1)!2^{(k+1)}} D^{2k} \right) D =: D \left(\sum^{\infty}_{k=0} a_k D^{2k} \right)D\, .
	\end{aligned}
	$$
	From the estimate \eqref{AuxC2} we get that 
	$$
	\sup_{ \x_0 \in \R^2} \left\|e^{-\delta \|\cdot -\x_0\|}\left(\sum^{\infty}_{k=0} a_k D^{2k} \right) e^{\delta \|\cdot -\x_0\|} \right\| \leq C \, \, .
	$$
	Moreover, from the estimate $\left| e^{-\delta \|\x -\x_0\|} D(\x;\x') e^{\delta \|\x' -\x_0\|} \right| \leq  C e^{-\frac{\alpha}{2} \|\x-\x'\|}$ and the Cauchy-Schwarz inequality, one deduces that 
	$$
	\sup_{\x_0 \in \R^2} \left\| e^{-\delta \|\cdot -\x_0\|} D e^{\delta \|\cdot -\x_0\|} \right\|_{\mathcal{B}(L^2,L^\infty)} \leq C \, ,
	$$
	where $\mathcal{B}(L^2,L^\infty)$ is the space of bounded operators from $L^2(\R^2)$ to $L^\infty(\R^2)$. The previous two estimates imply that
	\begin{equation}
	\label{AuxAppC2}
	\sup_{ \x_0 \in \R^2} \left\|e^{-\delta \|\cdot -\x_0\|}D\left(\sum^{\infty}_{k=0} a_k D^{2k} \right) e^{ \delta \|\cdot -\x_0\|} \right\|_{\mathcal{B}(L^2,L^\infty)} \leq C \, ,
	\end{equation}
	hence $e^{-\delta \|\cdot -\x_0\|}\left(\sum^{\infty}_{k=0} a_k D^{2k+1} \right)  e^{ \delta \|\cdot -\x_0\|} $ is a Carleman operator and in particular has an integral kernel. Furthermore, notice that $\F_{b+\epsilon}$ and $\F_{b}$ map the Hilbert space into the domain of the operator $H_{b+\epsilon}$ and $H_b$ respectively. A standard argument involving the Sobolev embedding shows that the domain of the Hamiltonians $H_{b+\epsilon}$ and $H_b$ can be embedded in the space of continuous functions. This implies that also the operator $e^{-\delta \|\cdot -\x_0\|} \left(\sum^{\infty}_{k=0} a_k D^{2k+1} \right) e^{ \delta \|\cdot -\x_0\|}$ maps the Hilbert space into the space of continuous functions. Hence, mimicking the strategy in \cite[Proposition 3.1]{CorneanNenciu09}, for every $\psi \in C^{\infty}_0(\R^2)$, we define the linear functional
	$$
	C^{\infty}_0 \ni \psi \mapsto \int_{\R^2} d\x  \left(\sum^{\infty}_{k=0} a_k D^{2k+1} \right)(\x_0;\x) e^{ \delta \|\x -\x_0\|} \psi(\x) \, . 
	$$
	Using \eqref{AuxAppC2}, we can extend the previous linear functional to the whole Hilbert space $L^2(\R^2)$. Therefore, Riesz's representation theorem implies that
	\begin{equation}
	\label{AuxAppC}
	\begin{aligned}
	&\sup_{ \x_0 \in \R^2} \left\|e^{\delta \|\cdot -\x_0\|}\left(\sum^{\infty}_{k=0} a_k D^{2k+1} \right)(\x_0;\cdot )  \right\|_{L^2(\R^2)}  \\
	&=  \sup_{ \x_0 \in \R^2} \left\|e^{\delta \|\cdot -\x_0\|}\left(\sum^{\infty}_{k=0} a_k D^{2k+1} \right)(\cdot;\x_0 )  \right\|_{L^2(\R^2)} \leq C \, .
	\end{aligned}
	\end{equation}
	Using \eqref{AuxAppC}, together with \eqref{AuxC3}, the Cauchy-Schwarz and the triangle inequality, we eventually obtain that 
	$$
	\begin{aligned}
	&\sup_{\x,\x' \in \R^2} e^{\delta \|\x-\x'\|} \left| \left(\left({\bf 1}- D^2\right)^{-\frac{1}{2}} - {\bf 1} \right) (\x;\x') \right| \\
	& \leq \sup_{\x,\x' \in \R^2} \, \int_{\R^2} d\tx \; e^{\delta \|\x-\tx\|} \left|D(\x;\tx)\right| e^{\delta \|\tx-\x'\|} \left|\left(\sum^{\infty}_{k=0} a_k D^{2k+1} \right)(\tx; \x' )  \right| \\
	&\leq \sup_{ \x \in \R^2} \left\|e^{\delta \|\cdot-\x\|} \left|D(\x;\cdot)\right| \right\|_{L^2(\R^2)}  \sup_{ \x' \in \R^2} \left\|e^{\delta \|\cdot -\x'\|}\left| \left(\sum^{\infty}_{k=0} a_k D^{2k+1} \right)(\cdot; \x' ) \right| \right\|_{L^2(\R^2)}  \leq C \, .
	\end{aligned}
	$$
	Thus the integral kernel of the unitary operator $U$ satisfies \eqref{RequirementForU}.
\end{proof}

\bigskip

\noindent\textsc{Acknowledgments.}
{The Authors would like to thank J.~Bellissard, G.~Nenciu, G.~Panati and S.~Teufel for inspiring discussions.  H.~C. and M.~M. gratefully acknowledge the support of the Simons-CRM Scholar-in-residence program during the preparation of this work, and the financial support from Grant 8021-00084B of the Danish Council for Independent Research $|$ Natural Sciences. The work of D.\ M.\ has been supported by the European Research Council (ERC) under the European Union’s Horizon 2020 research and innovation programme (ERC CoG UniCoSM, grant agreement n.724939). }


\vfill

{\footnotesize

\begin{tabular}{rl}
(H.\ D.\ Cornean) & \textsc{Department of Mathematical Sciences, Aalborg University} \\
 &  Skjernvej 4A, 9220 Aalborg, Denmark \\
 &  \textsl{E-mail address}: \href{mailto:cornean@math.aau.dk}{\texttt{cornean@math.aau.dk}} \\
 \\
 (D.\ Monaco) & \textsc{Dipartimento di Matematica, ``La Sapienza'' Universit\`{a} di Roma} \\
&   Piazzale Aldo Moro 2, 00185 Roma, Italy \\
&  \textsl{E-mail address}: \href{mailto:monaco@mat.uniroma1.it}{\texttt{monaco@mat.uniroma1.it}} \\
 \\
(M.\ Moscolari) & \textsc{Department of Mathematical Sciences, Aalborg University} \\
 &  Skjernvej 4A, 9220 Aalborg, Denmark \\
 &  \textsl{E-mail address}: \href{mailto:massimomoscolari@math.aau.dk}{\texttt{massimomoscolari@math.aau.dk}} \\
\end{tabular}

}
\end{document}